\newtheorem{theorem}{Theorem}
\newtheorem{lemma}{Lemma}
\newtheorem{corollary}{Corollary}
\newtheorem{definition}{Definition}
\newtheorem{remark}{Remark}
\newcommand{\saeed}[1]{}
\newcommand{\saeedLater}[1]{}
\newcommand{\saeedDissertation}[1]{}
\newcommand{\doc}[1]{}
\newcommand{\tikzfolder}{./tikz-files/}
\title{The Method of Conditional Expectations \\for PAPR and Cubic Metric Reduction}
\date{Started 30.01.2018}
\author{Saeed Afrasiabi-Gorgani and Gerhard Wunder 
\thanks{This work was supported by the German Research Foundation (DFG) grant WU 598/3-1.}
\thanks{The authors are with the Department of Mathematics and Computer Science, Freie Universit\"{a}t Berlin, 14195, Berlin, Germany, e-mail: {s.afrasiabi, g.wunder}@fu-berlin.de.}}%
\begin{document}

\maketitle

\begin{abstract}
The OFDM waveform exhibits high fluctuation in the signal envelope which causes distortion in the nonlinear power amplifier of the transmitter. Peak-to-Average Power Ratio (PAPR) and Cubic Metric (CM) are the common metrics to quantify the phenomenon. A promising approach for PAPR or CM reduction is Sign Selection which is based on altering the signs of the data symbols. In this paper, the Method of Conditional Expectations (CE Method) is proposed to obtain a competing suboptimal solution to the Sign Selection problem. For PAPR reduction, a surrogate metric is introduced which allows for an efficient application of the CE Method. For CM reduction, the tractability of the definition of CM is exploited to this end. The algorithm is analyzed to obtain an upper bound on the worst-case reduced metric value. A noticeable characteristic  is the persistent reduction capability for a wide range of subcarrier numbers. In particular, simulations show a reduction of the so-called ``effective PAPR"  to about~6.5~dB from 10.5~dB and 11.7~dB respectively for~64 to~1024 subcarriers. A similar steady reduction of 3~dB is observed for CM.  In addition, the CE Method leads to a pruned version of  Sign Selection which halves the rate loss.
\end{abstract}

\begin{IEEEkeywords}
Orthogonal Frequency Division Multiplexing (OFDM), Cubic Metric (CM), Peak-to-Average Power Ratio (PAPR)
\end{IEEEkeywords}

\section{Introduction}

\saeed{we cannot show that $\mathbb{E}[u(t,\mathbf{c}\odot \mathbf{x}^\ast)]=0$. But simulations show it's good enough, no increase in the signal mean value, i.e. in turn its power. This is not now mentioned at all. if a reviewer asks for it, then a proof might be unreachable. I started a note in my notebook }

\saeed{bring back dissertation and saeedLater comments once at least}

\saeed{it's necessary to show that the begiining point in seach is not important! otherwise I even have to mentine dhwat $\mathcal{M}'$ I have used!}

Orthogonal Frequency Division Multiplexing (OFDM) is a well-known multicarrier waveform which has been used in the major wireless communication systems. A main drawback of OFDM scheme is the high dynamic range of its signal envelope, which causes nonlinear distortion at the output of the power amplifier \cite{Pun2007}. In order to avoid the distortion, the so-called power back-off needs to be applied in the power amplifier. Consequently, the power amplifier operates with a low energy efficiency. Especially for mobile equipments where battery life is limited and power amplifiers cannot have a large linear range due to cost constraints, the problem is more pressing~\cite{Ekstrom2006}. It is therefore critical  to reduce the required power back-off.

 The problem is commonly formulated as the minimization of a metric which captures the physical phenomenon and determines the power back-off. The classical metric is the ratio of the peak instantaneous signal power to the average power  over consecutive signal segments  referred to as Peak-to-Average-Power-Ratio (PAPR) \cite{Pun2007}.   An alternative metric called Cubic Metric (CM), which is based on the energy in the nonlinear distortion, was more recently proposed and reported to predict the required back-off more accurately \cite{CMmotorola2004}. 



\saeedDissertation{ In evaluation of a distortion-less method for PAPR reduction, three factors are considered: rate loss, computational complexity and the reduction gain. When potentials of a novel method are being investigated, we look at the achievable reduction gain versus rate loss regardless of the complexity. Computational complexity becomes important when the method is considered for practical purposes. Then modifications are usually required for a lower complexity which might cause degradation in performance. That is, a trade-off in complexity versus reduction gain must be done for a specific rate loss. The new paradigm allows for choice of the intermediate metric as an extra degree of freedom in achieving better point in the reduction gain, complexity and rate loss space. In this paper all these three factors will be addressed. However, evaluation of the computational complexity depends on technologies and requirements of the day. Therefore, a sound treatment of this aspect requires a separate piece and type of work and we suffice to a crude evaluation. }

The PAPR reduction problem has been tackled by several approaches, which can be broadly categorized into two groups. Methods based on deliberately introduced distortion constitute one category, with Clipping and Filtering \cite{armstrong2002} as a well-known example. The second category consists of the distortionless methods which typically provide PAPR reduction at the expense of some reserved resources which incurs rate loss, such as Selected Mapping (SLM) \cite{543811}, Tone Reservation (TR) and Tone Injection (TI) \cite{Tellado:2000}. The methods differ significantly at least in terms of reduction gain, rate loss, transmission power and complexity. A comparison of the pros and cons requires a separate study as provided, for instance, in \cite{1421929}. A refreshed and fundamental review of the problem is as well provided in \cite{gerhard2013SPM}. 

The CM reduction problem, on the other hand, has received limited attention compared to PAPR.  In particular, very few of the already known methods from PAPR reduction research are examined for CM reduction, such as in~\cite{Behravan2011}, \cite{ZhuDescClip2013} and \cite{SiohanCM2006} for TR, Clipping and Filtering and SLM, respectively. It will be emphasized in this paper that CM has a more amenable mathematical structure, which indicates that there is room to improve on the performance and complexity of the back-off reduction problem by considering CM instead of PAPR, besides its reportedly higher accuracy.

Sign Selection is a promising distortionless approach based on altering the signs of the data symbols to reduce the PAPR, which has shown potentials for considerable reduction performance at the price of a rate loss equivalent to one bit per complex data symbol for each utilized sign variable \cite{Sharif2004constantPMEPR, Afrasiabi2015derandomized,TellamburaGuidedSS2018,Sharif2009sign, tellamburaCrossEntropy2008}.
Considering $N$ subcarriers, there are $2^N$ possible sign combinations, which implies an exponential complexity order for the optimal sign selection. This has motivated research for competing suboptimal solutions. Some proposals with noticeable performance include the application of the method of Conditional Probabilities in \cite{Sharif2004constantPMEPR, Afrasiabi2015derandomized}, a sign selection method guided by clipping noise in \cite{TellamburaGuidedSS2018}, a greedy algorithm in  \cite{Sharif2009sign} and a cross-entropy-based algorithm in \cite{tellamburaCrossEntropy2008}. In this work,  the method of Conditional Expectations (CE Method), originally proposed in fields of discrete mathematics and graph theory \cite{MitzenmacherUpfal2005}, is used to treat the Sign Selection problem to develop a simple algorithm with a competitive performance for both PAPR and CM reduction requiring only $\frac{N}{2}$ sign bits.

The core idea of the CE method is to treat the optimization variables, i.e. the signs of the complex data symbols, as random variables. This artificial randomness is then employed to optimize the signs using conditional expectations.  In addition to a direct application of the method to PAPR, a surrogate function referred to as  Sum-Exp (SE) is proposed to gain indirect PAPR reduction. Unlike the other metrics, SE has no physical interpretation and is not directly related to power back-off. However, it will be shown that its reduction results in the reduction of the PAPR with lower complexity. The CE method is also applied to CM reduction, where the benefit of the mathematical tractability of CM in deriving low complexity closed-form expressions is demonstrated. As a rather uncommon characteristic among the solutions of the Sign Selection problem in the literature,  an increasing reduction gain in PAPR and CM for increasing number of subcarriers is shown by simulations, which implies a roughly constant back-off for a large range of $N$.  Furthermore, the CE method allows the analysis of the reduction performance by providing upper-bounds on reduced PAPR and CM values for any combination of the data symbols.

\paragraph*{Notation} A random variable $X$ is distinguished from a realization $x$ by using upper and lower case letters, respectively. Vectors are shown by bold-face letters. For a vector $\mathbf{x}$, the notation $x_{m:n}$ is the compact form for $[x_m, x_{m+1},\ldots, x_n]$. The expected value of $Y$ with respect to the random variable $X$ is denoted by $\mathbb{E}_X[Y]$, where the subscript may be omitted if clear from the context. Cardinality of a set $\mathcal{S}$ is denoted by $|\mathcal{S}|$.

\section{Preliminaries}
\label{sec:pre}
In this section, the OFDM signal model as well as the definitions of the metrics PAPR, SE and CM are first presented. Then the Sign Selection problem is formalized and discussed.

\subsection{Signal Model}
Consider an OFDM scheme with $N$ subcarriers. Let $\mathcal{M}$ be the set of the complex-valued constellation points. The data symbols that modulate the subcarriers are equiprobably and independently generated with zero mean, which implies that $\sum_{x\in\mathcal{M}} x=0$. Accordingly, the random vector $\mathbf{B}\in \mathcal{M}^N$ denotes the vector of data symbols in an OFDM symbol.  Denoting the frequency separation of the first and the last subcarriers as $F_s$, the baseband continuous-time signal model for an OFDM symbol is
\begin{equation}
	u(t, \mathbf{B})=\frac{1}{\sigma_b\sqrt{N}}\sum_{k=0}^{N-1} B_k e^{i \frac{2\pi}{N} F_s k t} \quad\quad t\in[0,T),
	\label{eq:ofdmContSymbolDef}
\end{equation}
where $T=\frac{N}{F_s}$ and the signal power is normalized by $\sigma_b=\sqrt{\mathbb{E}[|u(t,\mathbf{B})|^2]}$. 
With the sampling frequency $L F_s$, where $L> 1$ is the oversampling factor, the discrete-time signal model for an OFDM symbol is 
\begin{align}
	s (n, \mathbf{B})\!&= u(\frac{n}{L F_s},\mathbf{B}) =\!\frac{1}{\sigma_b \sqrt{N}}\sum_{k=0}^{N-1} \!B_k e^{i \frac{2\pi}{LN} k n} \ \  n=0,1,\ldots, LN\!-\!1.
	\label{eq:ofdmSymbolDef}
\end{align}
The oversampling is necessary for reliable measurement of PAPR and CM from the discrete-time signal~\cite{WunderPeakValue2003, KimCubicMetric2016}.

\subsection{Peak to Average Power Ratio (PAPR)}
	
\begin{definition}
	The PAPR metric is a function of the  random data vector $\mathbf{B}\in\mathcal{M}^N$ and is defined as 
	\begin{equation}
	\theta_N(\mathbf{B}) = \max_{n=0,1,\ldots,LN-1} |s(n,\mathbf{B})|^2,
	\label{eqn:PAPRdef}
	\end{equation}
	where $s(n,\mathbf{B})$ is given in \eqref{eq:ofdmSymbolDef} and $L >1$ is the oversampling factor.
\end{definition}

It will be seen that the maximum operator in the definition of the PAPR makes the required derivations of the CE Method difficult. Here we propose the Sum-Exp (SE) metric, which will be  shown to be a suitable objective function to replace PAPR such that  a desirable indirect PAPR reduction is gained by SE reduction.

\begin{definition}
	The SE metric is a function of the random data vector $\mathbf{B}\in\mathcal{M}^N$ and is defined as
	\begin{align}
		\zeta_N(\mathbf{B})= \sum_{n=0}^{LN-1} e^{\kappa |s(n,\mathbf{B})|^2},
		\label{eqn:SEdef}
	\end{align}
	where $s(n,\mathbf{B})$ is given in \eqref{eq:ofdmSymbolDef}, $\kappa\geq 1$ is an adjustable parameter and  $L >1$ is the oversampling factor.
\end{definition}

 \saeedDissertation{best text found so far is the note ``Basic properties of the soft maximum" by John Cook} The SE metric is obtained from the \emph{log-sum-exp} function of the squared magnitude of the signal samples, i.e. $\log \sum_{n=0}^{LN-1} e^{|s(n,\mathbf{B})|^2}$, which  is a well-known approximation of the maximum function \cite{Boyd} since 
\begin{align*}
	\max_{i=0,\ldots,LN-1} |s(n,\mathbf{B})|^2 \leq \log \sum_{i=0}^{LN-1} e^{|s(n,\mathbf{B})|^2} \leq \max_{i=0,\ldots,LN-1} |s(n,\mathbf{B})|^2 + \log LN.
\end{align*}
The first inequality is strict unless $LN=1$ and approaches an equality as the maximum becomes larger relative to the rest of the samples,  while the second inequality holds when all values are equal. That is, the approximation improves when the spread of the amplitudes of the signal samples is larger. Therefore, high ratio of the peak power to the average power of the OFDM signal implies that log-sum-exp is likely to be an acceptable approximation for PAPR. Furthermore, it motivates the introduction of the scaling factor $\kappa \geq 1$ to modify the log-sum-exp function as $\frac{1}{\kappa}\log \sum_{i=0}^{LN-1} e^{\kappa|s(n,\mathbf{B})|^2}$ to increase the spread. The SE metric is obtained from the modified log-sum-exp function by omitting the monotonically increasing $\log$ function as well as the constant~$\kappa^{-1}$.

\saeedDissertation{  this LSE is no more an approximation of the PAPR. But it keeps a good correlation in a scatter plot. That might be the way to justify that the algorithm finally works. }

\subsection{Cubic Metric (CM)} 

CM \cite{CMmotorola2004} is based on the assumption of a third-order (cubic) polynomial model for the input-output relation of the power amplifier. That is, the output signal $v_o(t)$ for a passband input signal $v(t)$ is  assumed to be
\begin{equation*}
	v_o(t)=g_1 v(t) + g_3 v^3(t), \quad t\in\mathbb{R},
\end{equation*}
where the linear gain $g_1$ and the non-linear gain $g_3$ are constant and related to the amplifier design. While PAPR is based only on the peaks of the instantaneous power, CM directly captures the energy in the distortion term $v^3(t)$ and is calculated  as
\begin{equation*}
	\mathrm{CM}_\mathrm{dB}= \frac{\mathrm{RCM}_\mathrm{dB}[v(t)]-\mathrm{RCM}_\mathrm{dB}[v_\mathrm{ref}(t)]}{K_\mathrm{slp}}+K_\mathrm{bw},
\end{equation*}
where the subscript $\mathrm{dB}$ refers to the value in logarithmic scale and the Raw Cubic Metric (RCM) of a signal is defined as
\begin{equation}
	\mathrm{RCM}_\mathrm{dB}[v(t)]= 20 \log_{10}\left( \mathrm{rms}\!\left[\left(\frac{v(t)}{\mathrm{rms}[v(t)]}\right)^3 \right]\right).
	\label{eqn:RCMdef}
\end{equation}
The reference signal $v_\mathrm{ref}(t)$, the  slope factor $K_\mathrm{slp}$ and the bandwidth scaling factor $K_\mathrm{bw}$ \cite{CMmotorola} are independent of $v(t)$ and are not discussed here. The Root Mean Square (RMS) of a signal $v(t)$ over a large enough interval $U\subset \mathbb{R}$ is $\mathrm{rms}[v(t)]=\sqrt{\frac{1}{U} \int_U v^2(t) dt}$.

Consider that reduction of CM for $v(t)$ is essentially equivalent to reduction of its RCM. In addition, CM and RCM are constants calculated for the whole continuous-time passband signal, whereas practical reduction algorithms operate over individual discrete-time baseband OFDM symbols. Therefore, the discrete-time baseband version of the RCM of an OFDM symbol is actually used for CM reduction, as done in \cite{Behravan2011,ZhuDescClip2013,SiohanCM2006}, which is referred to  as Symbol RCM (SRCM) in this paper. 

\begin{definition}
	SRCM is a function of the random data vector $\mathbf{B}\in\mathcal{M}^N$ and is defined as
	\begin{align}
		\eta_{\scriptscriptstyle N}(\mathbf{B})=\frac{1}{LN} \sum_{n=0}^{LN-1} |s(n,\mathbf{B})|^6,
		\label{eqn:SRCM}
	\end{align}
	where $s(n,\mathbf{B})$ is given in \eqref{eq:ofdmSymbolDef} and $L>1$ is the oversampling factor.
\end{definition}

\saeedDissertation{a new note as a picture in the same folder about the passband to baseband conversion. saying that we should filter out the harmonics} 
In order to show the relation of RCM and SRCM, we shall first briefly discuss the baseband representation of $v^3(t)$. Let the baseband equivalent representation of $v(t)$ be $h(t)=\sum_{m=-\infty}^{\infty} u(t-mT,\mathbf{B}_m)$ as a function of complex data symbols $\mathbf{B}_m\in\mathcal{M}^N$ pertaining to consecutive OFDM symbols. By a suitable choice of the normalization factor, it follows from the standard procedure of passband to baseband conversion \cite{Benedetto1999} that $\mathrm{rms}[v(t)]=\mathrm{rms}[|h(t)|]= 1$.
\saeedDissertation{technically it's not an equality as rms was not defined as ensemble average} 
Ignoring the scaling factors, it can as well be shown that $h^\ast(t)|h(t)|^2$ is the baseband representation of the frequency component of $v^3(t)$ at the carrier frequency \cite{Benedetto1999}, where $h^{\ast}(t)$ is the complex conjugate of $h(t)$. Consequently, $\mathrm{RCM}[v(t)]=(\mathrm{rms}[v^3(t)])^2=A \left(\mathrm{rms}[|h(t)|^3]\right)^2$ for some scalar $A$ gives the RCM in terms of the baseband continuous signal. Next, the discrete-time version of $h(t)$ is $h(n)= \sum_{m=-\infty}^{\infty} s(n-mLN,\mathbf{B}_m)$. Replacing the summation with an integral in calculation of the RMS of a discrete-time signal, we have $\mathrm{rms}[|h(t)|^3]\simeq \mathrm{rms}[|h(n)|^3]$ given adequate oversampling.  Finally, RCM can be written as
\begin{align}
	\mathrm{RCM}[v(t)]&\simeq \lim_{K\to\infty} \frac{1}{K} \sum_{n=0}^{K} |h(n)|^6  \nonumber \\
	&=  \lim_{M\to\infty} \frac{1}{2MLN}\sum_{m=-M}^{M-1} \sum_{n=0}^{LN-1}|s(n-mLN,\mathbf{B}_m)|^6 \nonumber \\
	&= \lim_{M\to\infty} \frac{1}{2M}\sum_{m=-M}^{M-1} \eta_{\scriptscriptstyle N}(\mathbf{B}_m).
	\label{eqn:RCM-SRCM}
\end{align}
Therefore, RCM of the OFDM signal is the average of the SRCM values of the underlying OFDM symbols.

\saeedDissertation{
\begin{align}
	\mathrm{RCM}[v(t)]&\simeq  \mathbb{E} [\eta_{\scriptscriptstyle N}(\mathbf{B})],
\end{align}
The equation follows from the independence of the data vectors $\mathbf{B}_m$. }


\saeedDissertation{CM was proposed \cite{CMmotorola2004} when signals more complicated than the voice signals were required to be handled by the transmitter. It was reported that CM predicts the required power de-rating better than PAPR. Application of the polynomial model for the input-output relationship of a nonlinear device is well-known, as well as the fact the major part of the nonlinear distortions added to the amplified signal is due to the third order term in this model. Namely, the cubic polynomial model for output of the power amplifier for the input passband signal $v(t)$ is  $v_o(t)=g_1 v(t) + g_3 v^3(t)$, where the linear and non-linear gains are constant and dependent only on the amplifier design.}

\saeedDissertation{To make the reduction algorithm applicable to the metric, firstly it must be rewritten in baseband discrete-time domain. Let $u(t)$  in \eqref{eq:ofdmContSymbolDef} be the baseband equivalent of $v(t)$, i.e., $v(t)=\sqrt{2}\mathrm{Re}\{u(t)e^{2\pi F_c t}\}$. It can be shown that $\frac{3}{\sqrt{2}^3}u(t)|u(t)|^2$ is the baseband equivalent of $v^3(t)$. \saeedDissertation{ find my notes in the same folder.} This representation ensures that the energy in  the baseband equivalent and the passband signal is the same.}

\saeedDissertation{Note that $\tilde{v}_\mathrm{rms}$ cannot be generally omitted as it depends on the the changes to the signal segment. But in the case of sign changes $\frac{1}{T}\int_0^T |\tilde{v}(t)|^2$ remains unchanged. but considering its role it might be redundant in general. hum?}

\subsection{The Sign Selection Problem}

As introduced before, Sign Selection refers to altering the signs of the data symbols in an OFDM symbol in order to reduce a desired metric, such as PAPR and CM. Therefore,  for the constellation $\mathcal{M}$, $\log_2 |\mathcal{M}|-1$ bits per transmitted data symbol carry information and one bit is determined by the adopted Sign Selection algorithm. To perform the bit-to-symbol mapping in the transmitter, initially consider taking independently and equiprobably distributed random sign bits  to complete the $\log_2 |\mathcal{M}|$-bit blocks. This formulation helps analytical derivations in later sections and will be shortly shown not to affect the solution.  For the resulting vector of complex data symbols $\mathbf{b}\in\mathcal{M}^N$, the Sign Selection approach seeks a solution $\mathbf{x}^\ast$ for the problem 
\begin{equation}
	\min_{\mathbf{x}\in\{-1,1\}^N} f(\mathbf{b}\odot\mathbf{x}),
	\label{eqn:optimizationProblem}
\end{equation}
where $f(\cdot)\geq 0$ is a metric defined on the OFDM symbol and $\odot$ denotes element-wise multiplication of vectors.  Accordingly, $\mathbf{b}\odot\mathbf{x}^\ast$ will be the transmitted symbols. Considering that the solution space of~\eqref{eqn:optimizationProblem} grows exponentially with $N$, the objective of this paper is to  derive an efficient algorithm to obtain a suboptimal solution.

\begin{figure}[t]
	\centering
	\includegraphics[width=0.3\columnwidth]{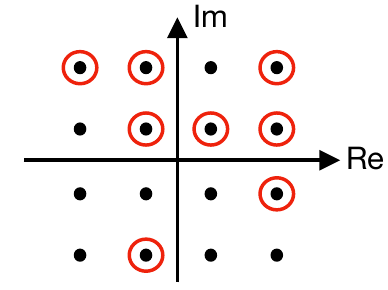}
	\caption{A non-unique choice of $\mathcal{C}$ from $\mathcal{M}$ for the 16-QAM constellation.}
	\label{fig:choiceOfConst}
\end{figure}

Now we justify that the random sign bits used to complete the $\log_2 |\mathcal{M}|$-bit blocks do not alter the minimization problem.  Assume that the constellation $\mathcal{M}$ is symmetric such that for each point $y\in\mathcal{M}$, the negated value $-y$ is in the set. Let $\mathcal{C}\subset\mathcal{M}$ be a non-unique choice of  $\frac{|\mathcal{M}|}{2}$ points of $\mathcal{M}$ such that if $y\in\mathcal{C}$,  then $-y\notin\mathcal{C}$.  A sample choice of $\mathcal{C}$ for 16-QAM is shown in Fig.~\ref{fig:choiceOfConst}.   For every $\mathbf{c} \in \mathcal{C}^N$, let $\Omega_\mathbf{c}=\{\mathbf{c} \odot \mathbf{x}, \mathbf{x}\in\{-1,1\}^N\}$.  The space of the data vectors $\mathcal{M}^N$ can be partitioned into the sets $\Omega_\mathbf{c}$ for $\mathbf{c}\in\mathcal{C}^N$ such that  
\begin{equation}
	\mathcal{M}^N=\cup_{\mathbf{c}\in\mathcal{C}^N}\ \Omega_\mathbf{c}
\end{equation}
and $\Omega_\mathbf{c} \cap \Omega_{\mathbf{c}'}=\varnothing$ when $\mathbf{c}\neq \mathbf{c}'$. Therefore, every $\mathbf{b}$ in \eqref{eqn:optimizationProblem} belongs to a partition $\Omega_\mathbf{d}$ such that $\mathbf{d}\in\mathcal{C}^N$ and $\mathbf{b}=\mathbf{d}\odot\mathbf{v}$ for some $\mathbf{v}\in\{-1,1\}^N$. Having all possible sign vectors as the solution space, it is clear that the Sign Selection problem always seeks the minimum of the partition which contains $\mathbf{b}$. Formally, $\min_{\mathbf{x}\in\{-1,1\}^N} f(\mathbf{b}\odot\mathbf{x})=\min_{\mathbf{x}\in\Omega_\mathbf{d}} f(\mathbf{x})$ for every $\mathbf{b}\in\Omega_\mathbf{d}$. Notice that although the starting vector $\mathbf{b}\in\Omega_\mathbf{d}$ does not affect the solution of~\eqref{eqn:optimizationProblem} for the partition $\Omega_\mathbf{d}$, it may change the suboptimal solution provided by a proposed algorithm.
 
The (bit-to-)symbol mapping in the transmitter and the decoding in the receiver are based on a predetermined $\mathcal{C}$.  On the transmitter side, the  data symbols are obtained by mapping $\log_2 |\mathcal{M}|-1$ bits to a point in $\mathcal{C}$. On the receiver side, the decoding of the symbol of each subcarrier is performed by choosing $c\in\mathcal{C}$ when one of $\pm c\in\mathcal{M}$ is detected and reversing the symbol mapping accordingly. Notice that the decoding adds no complexity to the receiver. Besides,  the choice of $\mathcal{C}$ plays a role only in the symbol mapping and decoding and is otherwise immaterial to the Sign Selection problem. Particularly, it can be shown that the partitioning described before is independent of $\mathcal{C}$.


\saeedDissertation{\begin{proof}
(i) Let $\mathbf{b}_i=[b_{11} b_{12} \ldots b_{1N}]^T\in\mathcal{M}'^N$. For any $\mathbf{b}_j\neq \mathbf{b}_i$, at least for one element we have $b_{iq}\neq b_{jq}$ and by definition, $b_{iq}\neq-b_{jq}$. Therefore, there exists no choice of $\mathbf{x}_1,\mathbf{x}_2\in\{-1,1\}^N$ such that $b_{iq} x_{1q}=b_{jq} x_{2q}$. Therefore, $\mathbf{b}_i \odot \mathbf{x}_1\neq \mathbf{b}_j \odot \mathbf{x}_2$ for all $\mathbf{x}_1,\mathbf{x}_2\in\{-1,1\}^N$, which means that $\Omega_i \cap \Omega_j=\emptyset$.
(ii)$|\Omega|=|\mathcal{M}|^N$ and $|\Omega_i|=2^N$ for all i's. $|\cup_i \Omega_i|=\sum_i |\Omega_i|=|\mathcal{M}'|^N 2^N=|\Omega|$.  
\end{proof}}

\saeedDissertation{The pairings in M are fixed. There are simply $|\mathcal{M}|/2$ of them. Choice of $M'$ is only important for the decoding, so we know with which data symbol a pair is associated. So the solution is independent of the choice of M'. This is a property of the sign selection problem! and it doesn't say it's indept of the data vector in the partition. so make the problem statement based on b. define M' after that! this way I can remove the c thing totally! you also mention in the bit to symbol mapping that we indeed choose M' and have a c there. then it's like we did another coin flip to get the b. which changes the initial point in the partition. it'll be seen in the CE method that it does not depend in the initial point. but this can't be said in general}

\saeedDissertation{we finally need to work with the problem $\min \ f(\mathbf{b}\odot\mathbf{x})$ so that lemma 1 works! but in the general case of algorithm I and any any objective function $f$ that gives the suboptimal solution $\mathbf{x}^\ast$ for the problem $\min \ f(\mathbf{c}\odot \mathbf{x})$ and  $\mathbf{y}^\ast$ for the problem $\min \ f(\mathbf{c}\odot \mathbf{v} \odot  \mathbf{y})$ for any $\mathbf{v}$, we cannot say that $f(\mathbf{c}\odot \mathbf{v} \odot  \mathbf{y})= f(\mathbf{c}\odot   \mathbf{x})$. can we? I'm not sure. so skip and do the proof for the specific solution given by the CE method later when it is presented. after a few hours of confusion, which might have been stupid, I couldn't show it for the CE method too! that is, I can't show that changing the beginning point does not affect the decision rules or the final reduced value! The other solution was to deal with the problem directly in proof of lemma 1. again I couldn't show an equivalent. the problem is basically with the expectation. So made a trick. looks good. :D maybe the same trick could solve the previous problems iwth the proof too!}

\saeedDissertation{
It turned out that it holds by simulation that the mean of the signal is roughly zero. There is a section on it in the code (v4). The remark below is nonsense as the CE method ignores any sign previously assigned. it is not clear if we can prove that $\mathbb{E}[C_k X^\ast_k(\mathbf{C}]=0$. Anyways note that $X^\ast_k$ is a function of all data symbols $\mathbf{C}$ and not only $C_k$. This is an issue for other sign selection methods too, it's basically not addressed!
\begin{remark}
\saeed{revise}Consider the bit to symbol mapping outcome as $\mathbf{c}\in\mathcal{M}'^N$. It will be seen later that it is convenient to have the outcome in the zero-mean space $\mathcal{M}^N$\saeed{spot and mention where}. This can be seen as an extra coin flip for each data symbol: $\mathbf{C}=\mathbf{c}\odot \mathbf{V}$ where $\mathbf{V}\in\{-1,1\}^N$, which is merely a random change in initial point in the search space.\saeed{not in search space!}  Then $\mathbf{v}\odot\mathbf{x}^\ast$ is the solution for $\mathbf{c}$. \saeedLater{this was far more elaborated in early versions}
\end{remark}}

As the final comment, sign selection clearly incurs rate loss. Consider the generalized scheme where $N_s\leq N$ signs are used in the sign selection. That is, $N_s$ data symbols carry $\log_2 |\mathcal{M}|-1$ bits of information and the remaining $N-N_s$ data symbols are mapped from $\log_2 |\mathcal{M}|$ bits to  $\mathcal{M}$. The incurred amount of rate loss, i.e. the ratio of the bits used for Sign Selection to the total number of bits in an OFDM symbol, is 
\begin{align}
	R= \frac{N_s }{N \log_2 |\mathcal{M}|}
	= \frac{N_s}{N} \log_{|\mathcal{M}|} 2.
	\label{eqn:rateLoss}
\end{align}
Evidently, the rate loss is inversely proportional to the constellation size $|\mathcal{M}|$. 


\saeedLater{How do we know that $\mathbb{E}[s(t,\mathbf{C}\odot \mathbf{x}^\ast(\mathbf{C}))]=0$ and how necessary is it? Elaborate in report 8.}

\section{Method of Conditional Expectations}
\label{sec:algorithm}

The CE Method \saeed{spencer?}\cite{MitzenmacherUpfal2005} is represented here for obtaining a suboptimal solution to the Sign Selection problem for reduction of an arbitrary metric $f(.)\geq 0$. For a given data vector $\mathbf{b}\in\mathcal{M}^N$, a random vector of sign variables $\mathbf{X}\in \{-1,1\}^N$ is initially assumed with equiprobable and independent elements, which are then sequentially decided and fixed. Consider the $j^\mathrm{th}$ iteration where the random signs $X_{0:j-1}$ are fixed to $x^\ast_{0:j-1}$. The expected values of $f(\mathbf{b}\odot\mathbf{X})$ conditioned on $X_{0:j-1}=x^\ast_{0:j-1}$ with $X_j=1$ and $X_j=-1$ are compared and  the sign that yields the smaller expectation is chosen as $x_j^\ast$. Formally, a sub-optimal solution to the minimization problem stated in \eqref{eqn:optimizationProblem} can be obtained by sequentially choosing the sign variables as
	\begin{align}
		x_j^\ast &= 
		\begin{cases}
			\underset{x_0\in\{\pm1\}}{\mathrm{arg\ min}} \ \mathbb{E} [f(\mathbf{b}\odot\mathbf{X})|X_0=x_0] & j=0 \\
			\underset{x_j\in\{\pm1\}}{\mathrm{arg\ min}} \ \mathbb{E} [f(\mathbf{b}\odot\mathbf{X})|X_{0:j-1}=x^\ast_{0:j-1},X_j=x_j] & j=1,\ldots,N-1.
		\end{cases}
		\label{eqn:CEGeneralRule}
	\end{align}


The decision rule given in \eqref{eqn:CEGeneralRule} is based on introducing random sign variables and then reducing the conditional expectation of the original objective function. The justification that~\eqref{eqn:CEGeneralRule} leads to a desirable suboptimal solution of \eqref{eqn:optimizationProblem} is explained partly here for the general metric $f$ and will be concluded in Section~\ref{sec:analysis} for PAPR and SRCM. For the $j^\mathrm{th}$ sign decision, let
\begin{equation}
	g_j^\pm(\mathbf{b})=
	\begin{cases}
		\mathbb{E} [f(\mathbf{b}\odot\mathbf{X})|X_0=\pm1] & j=0 \\
		\mathbb{E} [f(\mathbf{b}\odot\mathbf{X})|X_{0:j-1}=x^\ast_{0:j-1},X_j=\pm1] & j=1\ldots,N-1.
	\end{cases}
	\label{eqn:gDefinition}
\end{equation}
Following the decision criterion in \eqref{eqn:CEGeneralRule}, we  have
\begin{align*}
	\mathbb{E} &[f(\mathbf{b}\odot\mathbf{X})|X_{0:j} =x^\ast_{0:j}]=\min\  \{g_j^+(\mathbf{b}), g_j^{-}(\mathbf{b})\},
\end{align*}
whereas for the $(j-1)$-th step with $j\geq1$, it holds that
\begin{align*}
	\mathbb{E} [f(\mathbf{b}\odot\mathbf{X}) | X_{0:j-1}=x^\ast_{0:j-1}] &=g_j^+(\mathbf{b}) \mathbb{P}(X_j=1|X_{0:j-1}=x^\ast_{0:j-1})   \nonumber \\
	& \quad \quad \quad \quad +g_j^-(\mathbf{b}) \mathbb{P}(X_j=-1|X_{0:j-1}=x^\ast_{0:j-1}) \nonumber \\
	&  =g_j^+(\mathbf{b}) \mathbb{P}(X_j=1)  +  g_j^-(\mathbf{b}) \mathbb{P}(X_j=-1) \nonumber \\
	& =\frac{1}{2}(g_j^+(\mathbf{b}) + g_j^-(\mathbf{b})) \nonumber \\
	&  \geq \min\{g_j^+(\mathbf{b}) , g_j^-(\mathbf{b})\}.
\end{align*} 
Therefore,
\begin{equation*}
	\mathbb{E}[f(\mathbf{b}\odot\!\mathbf{X})|X_{0:j}\!=\!x^\ast_{0:j}]\!\leq\! \mathbb{E}[f(\mathbf{b}\odot\!\mathbf{X})|X_{0:j-1}\!=\!x^\ast_{0:j-1}]
\end{equation*}
for $j=1,\ldots, N-1$. This shows that for a given $\mathbf{b}$, the non-increasing sequence of the conditional expectations begins with the \emph{initial expectation} $\mathbb{E}_\mathbf{X}[f(\mathbf{b}\odot\mathbf{X})]$ and ends with $f(\mathbf{b}\odot\mathbf{x}^\ast)=\mathbb{E} [f(\mathbf{b}\odot\mathbf{X})|\mathbf{X}=\mathbf{x}^\ast]$ where no randomness is left. That is, the last conditional expectation coincides with a metric value such that
\begin{align}
	f(\mathbf{b}\odot\mathbf{x}^\ast) \leq \mathbb{E}_\mathbf{X}[f(\mathbf{b}\odot\mathbf{X})].
	\label{eqn:CEguarantee}
\end{align}
\saeed{present this as a lemma?}
 This justifies that the decision criterion given in \eqref{eqn:CEGeneralRule} leads to a value of the original metric $f$ with the property stated above. Proving the reduction and the upper-bound on the reduced values is not known for the general case of the arbitrary metric $f$ and will be treated in Section~\ref{sec:analysis} specifically for PAPR and CM.  Calculation of the conditional expectations required at each step is a major step in development of the algorithm and  will be discussed in Section~\ref{sec:CE-calc}.

\saeedDissertation{say that we simply set $x^\ast_{0:N_f-1}$ as the sign of data symbols taken from $\mathcal{M}'$. this will hopefully keep a good consistency throughout the paper. then update the inequality guaranteed by the CE method to one which includes $N_f$.}

\saeedDissertation{do some work on the location of the signs, at least first or second half of subcarriers and their equivalence. }

\saeedDissertation{
\begin{lemma}
	The solution is independent of  the initial data vector in a partition. is this a property of the sign selection problem or the CE Method? I guess we can't claim that this is a property of any suboptimal solution of the SS problem. But it is a common property of the methods I know now. This is important as it determines where this lemma should be put.
\end{lemma}
from this lemma we need: 1) the decision rule, hence the CEs, updated using a given data vector which is from M and not M'. so we can do the proofs. NO we are okay with the updated SS problem. this is good to mention as a property. Finally I decided I don't use the lemma anywhere and I don't have time to write it :D the original problem in proof of Lemma1 is solved by the reformulation the SS problem}

\section{Calculation of the Conditional Expectations}
\label{sec:CE-calc}
For a given vector of data symbols $\mathbf{b}$, the decision for $x^\ast_j$ requires calculation of  $g^\pm_j(\mathbf{b})$ in~\eqref{eqn:gDefinition}  which is compactly rewritten as
\begin{align}
	g_j^\pm(\mathbf{b})=\mathbb{E} [f(\mathbf{b}\odot\mathbf{Y}^\pm_j)],
	\label{eqn:gDef2}
\end{align}
where 
\begin{equation*}
	\mathbf{Y}^\pm_j=[x^\ast_0,x^\ast_1,\ldots,x^\ast_{j-1},\pm1,X_{j+1},\ldots,X_{N-1}]^T
\end{equation*}
encapsulates the decided signs, the new sign variable set to $+1$ or $-1$ and the remaining random sign variables. The obvious way of calculating the conditional expectations for practically any metric $f$ is to use the empirical average $\hat{g}^\pm_j(\mathbf{b},\mathbf{X}^{1:Q})$ to estimate $g_j^\pm(\mathbf{b})$,   which is
\begin{equation}
	\hat{g}^\pm_j(\mathbf{b},\mathbf{X}^{1:Q})=\frac{1}{Q} \sum_{l=1}^{Q} f\left(\mathbf{b}\odot \psi_{j}^\pm(\mathbf{X}^l)\right),
	\label{eq:estimatorGeneral}
\end{equation}
where $Q$ is the number of realizations of the random sign vector used for the estimation and
\begin{equation}
	\psi_{j}^\pm(\mathbf{X}^l)=[x_0^\ast, \ldots, x_{j-1}^\ast, \pm 1, X^l_{0}, \ldots, X^l_{N-j-2}]^T,
	\label{eq:psiDef}
\end{equation}
where the random variables $X^l_k\in\{-1,1\}, l=1,2,\ldots,Q, k=0,1,\ldots,N-j-2$ are independent and equiprobable.

Deriving more efficient ways of calculating the conditional expectations $g^\pm_j(\mathbf{b})$ is a pivotal part of the proposed method. The PAPR metric does not lend itself well to mathematical manipulations to obtain closed-form expressions. Consequently, the conditional expectations are estimated by the sample average as in~\eqref{eq:estimatorGeneral}, which will be further discussed. On the contrary, the definitions of SRCM and SE together with the statistical properties of the signal samples $s(n,\mathbf{b}\odot\mathbf{Y}^\pm_j)$  make it possible to derive closed-form expressions for $g^\pm_j(\mathbf{b})$. These results depend on convergence of the signal samples in distribution to a Gaussian random variable, proof of which is not trivial due to the specific signal model imposed by the Sign Selection problem. This will be clarified in the second part of this section before treating the calculations for  SE and SRCM.

\subsection{PAPR metric}
\label{sec:PAPR}
As mentioned before, the available method for calculation of the conditional expectations of PAPR is to perform estimation as specified in~\eqref{eq:estimatorGeneral}. It is apt to study the estimator in terms of a relation between the amount of the required numerical computations, which is proportional to $Q$, and the performance. Although such analysis for PAPR was not reached, authors have presented interesting results in  \cite{afrasiabiWSA2016} for the closely related metric 
\begin{equation}
	\phi_N(\mathbf{b}) = \sqrt{\theta_N(\mathbf{b})},
\end{equation}
which is referred to as Crest Factor (CF). This is a valid alternative as firstly CF has the same physical meaning and practical significance and secondly its relation with PAPR  is monotonically increasing. In addition, simulations show almost identical PAPR reduction gained by reduction of CF. 

Accordingly, consider $g_j^\pm(\mathbf{b})$ in \eqref{eqn:gDef2} for $f(\mathbf{b}\odot\mathbf{Y}^\pm_j)=\phi_N(\mathbf{b}\odot\mathbf{Y}^\pm_j)$. The sample average with $Q$ realizations of the sign vector is
\begin{equation}
	\hat{g}^\pm_j(\mathbf{b},\mathbf{X}^{1:Q})=\frac{1}{Q} \sum_{l=1}^{Q} \phi_N\left(\mathbf{b}\odot \psi_{j}^\pm(\mathbf{X}^l)\right),
	\label{eq:estimator}
\end{equation}
where $\psi_{j}^\pm(\mathbf{X}^l)$ and the random vectors $\mathbf{X}^l, l=1,2,\ldots,Q$ were defined in \eqref{eq:psiDef}. 
\saeedDissertation{our results are for using same sign samples for both, otherwise it's considerably poorer for large N!} 
It is clear that $\mathbb{E}[\hat{g}^\pm_j(\mathbf{b},\mathbf{X}^{1:Q})]=g^\pm_j(\mathbf{b})$. Consequently, $\lim_{Q\to\infty} \hat{g}^\pm_j(\mathbf{b},\mathbf{X}^{1:Q})=g^\pm_j(\mathbf{b})$ as the variance of $\theta_N(\mathbf{b}\odot \psi_{j}^\pm(\mathbf{X}^l))$ is finite. \saeedDissertation{, although not known for finite $N$. saeed: but would having the variance do good characterisation?}  In order to obtain a relation between the reliability of the estimation and  $Q$, McDiarmid's concentration inequality \cite{mcdiarmid1998} was employed to bound the probability of deviation of the estimate  from its true value as stated in the following theorem \cite{afrasiabiWSA2016}.  \saeedDissertation{Concentration inequalities have as well been shown to be useful for characterizing ... in [].} The proof is provided in Appendix~\ref{app:mcdiarmidProof} for completeness.  \saeed{think about verifying the theorem by simulation in the last section}

\begin{theorem}
	\label{thm:mcdiarmid}
Consider the sample average $\hat{g}^{\pm}_j(\mathbf{b}, \mathbf{X}^{1:Q})$ given in~\eqref{eq:estimator} as an estimate of the conditional expectation $g^{\pm}_j(\mathbf{b})$ given in~\eqref{eqn:gDef2} with $f(\mathbf{b}\odot\mathbf{Y}^\pm_j)=\phi_N(\mathbf{b}\odot\mathbf{Y}^\pm_j)$. For any $\mathbf{b}\in\mathcal{M}^N$ and $\epsilon\geq 0$,
\begin{equation*}
	\mathbb{P} (|\hat{g}^{\pm}_j(\mathbf{b},\mathbf{X}^{1:Q})- g^{\pm}_j(\mathbf{b}) |\geq \epsilon) \leq 2 \exp\!\left(\!-2\epsilon^2\frac{ Q}{d^2} \frac{N}{(N-j-1)}\right),
	\label{eq:estimation-deviation}
\end{equation*}
where $d=2 \sigma_b^{-1}\max_{x\in\mathcal{M}} |x|$.
\end{theorem}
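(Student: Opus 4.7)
The plan is to apply McDiarmid's bounded-differences inequality to $\hat{g}^{\pm}_j(\mathbf{b}, \mathbf{X}^{1:Q})$ viewed as a function of the $Q(N-j-1)$ independent Rademacher random variables $\{X^l_k : l=1,\ldots,Q,\ k=0,\ldots,N-j-2\}$ that appear in the tails of the vectors $\psi_j^{\pm}(\mathbf{X}^l)$. The only step that requires real work is establishing the bounded-differences constant, after which the theorem follows by plugging into McDiarmid.

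First I would spell out the per-coordinate change. Fix an $l$ and $k$, and let $\tilde{\mathbf{X}}^l$ agree with $\mathbf{X}^l$ except in the $k$-th entry, whose sign is flipped. Because $\hat{g}^{\pm}_j$ is an average of $Q$ independent terms, only the $l$-th term is affected, so it suffices to bound
\begin{equation*}
\tfrac{1}{Q}\bigl|\phi_N(\mathbf{b}\odot \psi_j^{\pm}(\mathbf{X}^l)) - \phi_N(\mathbf{b}\odot \psi_j^{\pm}(\tilde{\mathbf{X}}^l))\bigr|.
\end{equation*}
Denoting the two sign vectors inside $\phi_N$ by $\mathbf{y}$ and $\mathbf{y}'$, we have $y'_m = y_m$ for $m\neq k'$ (where $k'$ is the index in $\psi_j^{\pm}$ corresponding to $X^l_k$) and $y'_{k'} = -y_{k'}$. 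From the definition \eqref{eq:ofdmSymbolDef}, the difference of the two discrete-time signals at sample $n$ is a single complex exponential of magnitude $\frac{2|b_{k'}|}{\sigma_b\sqrt{N}}$, so by the reverse triangle inequality
\begin{equation*}
\bigl||s(n,\mathbf{b}\odot\mathbf{y})| - |s(n,\mathbf{b}\odot\mathbf{y}')|\bigr| \leq \tfrac{2|b_{k'}|}{\sigma_b\sqrt{N}} \leq \tfrac{d}{\sqrt{N}}.
\end{equation*}
Since $\phi_N$ is a maximum of $|s(n,\cdot)|$ over $n$, and the absolute difference of two maxima is bounded by the maximum of the absolute differences, the same bound transfers to $\phi_N$. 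Hence the bounded-differences constant for $X^l_k$ is $c_{l,k} = \frac{d}{Q\sqrt{N}}$.

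Finally I would assemble the constants and invoke McDiarmid. Summing over all $Q(N-j-1)$ variables gives
\begin{equation*}
\sum_{l=1}^{Q}\sum_{k=0}^{N-j-2} c_{l,k}^2 = Q(N-j-1)\cdot \tfrac{d^2}{Q^2 N} = \tfrac{d^2(N-j-1)}{QN},
\end{equation*}
and since $\mathbb{E}[\hat{g}^{\pm}_j(\mathbf{b},\mathbf{X}^{1:Q})] = g^{\pm}_j(\mathbf{b})$ by construction, McDiarmid's inequality yields the stated bound. The main (mild) obstacle is the sign-flip sensitivity calculation above; the key subtlety is that we must use the fact that a single-sign change perturbs \emph{every} time-sample $s(n,\cdot)$ by the same small amount $\frac{2|b_{k'}|}{\sigma_b\sqrt{N}}$, which is what allows the bound to survive the $\max_n$ in $\phi_N$ and gives the favorable $1/\sqrt{N}$ scaling that produces the $N/(N-j-1)$ factor in the exponent.
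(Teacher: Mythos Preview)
Your proposal is correct and essentially identical to the paper's own proof: both apply McDiarmid's inequality to the $Q(N-j-1)$ independent sign variables, bound the single-coordinate effect via $\bigl|\max_n|p(n)|-\max_n|q(n)|\bigr|\le\max_n|p(n)-q(n)|$ together with the observation that a single sign flip perturbs every sample by $2|b_{k'}|/(\sigma_b\sqrt{N})\le d/\sqrt{N}$, and then sum the squared constants to obtain $d^2(N-j-1)/(QN)$. There is no substantive difference in approach.
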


An interesting result of Theorem~\ref{thm:mcdiarmid} is that the upperbound on the probability of deviation is independent of $N$. This is further clarified as follows. A lower bound on the required $Q$ which guarantees the probability of deviation  by $\epsilon$ from the true value to be less than $p$ can be deduced within the context of Theorem~\ref{thm:mcdiarmid} as 
\begin{align*}
	Q_\circ &=  -\frac{d^2 \log \frac{p}{2}}{2\epsilon^2}  \frac{N-j-1}{N}.
\end{align*}
In particular, it indicates that $Q_\circ$ is proportional to the ratio of the number of the remaining sign variables to the total number of them. Equivalently,
\begin{align}
	Q_\circ &\approx  -\frac{d^2 \log \frac{p}{2}}{2\epsilon^2}  (1-\rho),
	\label{eqn:Qrel}
\end{align}
where $\rho=\frac{j}{N}$ and the approximation is due to $\frac{N-1}{N}\approx 1$ for large $N$.  However, establishing a connection between the probability of error in sign decision and $Q$ is challenging and needs further research. 

\saeedDissertation{It can be seen that the relation of $Q$ with $p^{-1}$ is logarithmic, making it rather insensitive to it. The major component is $\epsilon$. However, using the relation to obtain probability of error in sign decisions as a function of $Q$ is a tedious task, since for instance no information is available on $g^+_j-g^-_j$.}

\saeedDissertation{ Remainder of this section not clear to be useful. The above analysis on the estimations can be related to sign decision as follows. Without loss of generality, assume that $g_{j}^{+}(\mathbf{b})  <g_{j}^{-}(\mathbf{b})$ and let
$g_{j}^{\ast}(\mathbf{b})=g_{j}^{+}(  \mathbf{b})$ denote the lower expectation, i.e. the one belonging to the desired sign. The other case
follows from re-labeling. Furthermore, let $\hat{g}_{j}^{\ast}\left(  \mathbf{b}\right)
=\min\{\hat{g}_{j}^{+}\left(  \mathbf{b}\right)  ,\hat{g}_{j}^{-}\left(  \mathbf{b}\right)
\}$. Here we show that the following bound holds for $\alpha\geq 0$.%
\begin{align}
\mathbb{P}\left(  \left\vert \hat{g}_{j}^{\ast}\left(  \mathbf{b}\right)  -g_{j}^{\ast}\left(
\mathbf{b}\right)  \right\vert \geq\alpha\right) \! < \!4 \exp\!\left(\!-\alpha^2\frac{2  Q}{d^2_{\max}} \frac{N}{(N-j)}\right)
\end{align}
Assume that $|\hat{g}_{j}^{\pm}\left(  \mathbf{b}\right)  -g_{j}^{\pm}\left(\mathbf{b}\right)  |<\alpha$. There are two cases:
\begin{enumerate}
\item $\hat{g}_{j}^{+}\left(  \mathbf{b}\right)  \leq\hat{g}_{j}^{-}\left(  \mathbf{b}\right)  $:
here the estimates follow the true order so that $\hat{g}_{j}^{\ast
}\left(  \mathbf{b}\right)  =\hat{g}_{j}^{+}\left(  \mathbf{b}\right)  $. Hence%
\[
g_{j}^{\ast}\left(  \mathbf{b}\right)  -\alpha<\hat{g}_{j}^{\ast}\left( \mathbf{b}\right)
<g_{j}^{\ast}\left(  \mathbf{b}\right)  +\alpha.
\]
\item $\hat{g}_{j}^{+}\left(  \mathbf{b}\right)  >\hat{g}_{j}^{-}\left(  \mathbf{b}\right)  $: here
the estimates follow NOT the true order so that $\hat{g}_{j}^{\ast
}\left(  \mathbf{b}\right)  =\hat{g}_{j}^{-}\left(  \mathbf{b}\right)  $. We
have%
\[
\hat{g}_{j}^{-}\left(  \mathbf{b}\right)  >g_{j}^{-}\left(  \mathbf{b}\right)  -\alpha\geq
g_{j}^{+}\left(  \mathbf{b}\right)  -\alpha=g_{j}^{\ast}\left(  \mathbf{b}\right)  -\alpha
\]
and
\[
\hat{g}_{j}^{-}\left(  \mathbf{b}\right)  <\hat{g}_{j}^{+}\left(  \mathbf{b}\right)  <g_{j}%
^{+}\left(  \mathbf{b}\right)  +\alpha=g_{j}^{\ast}\left(  \mathbf{b}\right)  +\alpha.
\]
\end{enumerate}
Hence $|\hat{g}_{j}^{\pm}\left(  \mathbf{b}\right)  -g_{j}^{\pm}\left(  \mathbf{b}\right)
|<\alpha$ implies $|\hat{g}_{j}^{\ast}\left(  \mathbf{b}\right)  -g_{j}^{\ast}\left(
\mathbf{b}\right)  |<\alpha$. Therefore,%
\begin{align}
 \mathbb{P}\left(  \left\vert \hat{g}_{j}^{\ast}\left(  \mathbf{b}\right)  -g_{j}^{\ast
}\left(  \mathbf{b}\right)  \right\vert \geq\alpha\right) &\leq
\mathbb{P}\left(  \left\vert \hat{g}_{j}^{+}\left(  \mathbf{b}\right)  -g_{j}^{+}\left(
\mathbf{b}\right)  \right\vert \geq\alpha\right) \nonumber\\
&+\mathbb{P}\left(  \left\vert \hat{g}%
_{j}^{-}\left(  \mathbf{b}\right)  -g_{j}^{-}\left(  \mathbf{b}\right)  \right\vert
\geq\alpha\right)
\end{align}
which gives the desired result using \eqref{eq:estimation-deviation}. (why $\leq$?)}

\subsection{Distribution of $s(n,\mathbf{b}\odot\mathbf{Y}^\pm_j)$}

\saeedLater{one way to justify the approximations taken from limits of sequences could be that those sequences have decreasing ``variance"? decreasing distance to the limit, to be precise. otherwise, a limit never says that we're always approaching when N grows. }



We begin with characterizing the distribution of the continuous-time OFDM symbol $u(t, \mathbf{b}\odot \mathbf{Y}^\pm_j)$ in Theorem~\ref{thm:convergence}, which is required for performance analysis in Section~\ref{sec:analysis}. The distribution of the discrete-time version $s(n,\mathbf{b}\odot\mathbf{Y}^\pm_j)$ follows automatically, as stated in Corollary~\ref{cor:normalDist}, which is used in the derivation of the conditional expectations of SRCM and SE. 
As the first step, the following Lemma gives the covariance functions of the samples of the OFDM signal for a given~$\mathbf{b}$ and iteration $j$ of the CE Method as $N\to\infty$. Let
\begin{align}
	\hat{u}_r(t, \mathbf{b}\!\odot\! \mathbf{Y}^\pm_j) &\! =\! u_r(t, \mathbf{b}\!\odot\! \mathbf{Y}^\pm_j) - \mathbb{E}[u_r(t, \mathbf{b}\!\odot\! \mathbf{Y}^\pm_j)], \nonumber \\
	\hat{u}_i(t, \mathbf{b}\!\odot\! \mathbf{Y}^\pm_j) &\! =\! u_i(t, \mathbf{b}\!\odot\! \mathbf{Y}^\pm_j) - \mathbb{E}[u_i(t, \mathbf{b}\!\odot\! \mathbf{Y}^\pm_j)],
	\label{eqn:centeredSamples}
\end{align}
where subscripts $r$ and $i$ denote the real and imaginary parts respectively. 

\begin{lemma}
	\label{lem:variances}
	Consider $j=\rho N$ where $0\leq\rho\leq 1$ is a  rational number. For $\mathbf{B}$ randomly distributed in $\mathcal{M}^N$, let the variances and covariances of $u_r(t,\mathbf{B}\odot \mathbf{Y}^\pm_j)$ and $u_i(t, \mathbf{B}\odot \mathbf{Y}^\pm_j)$ with respect to $\mathbf{Y}^\pm_j$ as $N\to\infty$ and at any time instances $t_1, t_2\in [0,T)$  be denoted as
	\begin{align*}
		R^j_{rr}(\tau,\mathbf{B})&=\lim_{N\to\infty}\mathbb{E}_{\mathbf{Y}^\pm_j}[\hat{u}_r(t_1, \mathbf{B}\!\odot\! \mathbf{Y}^\pm_j) \hat{u}_r(t_2, \mathbf{B}\!\odot\! \mathbf{Y}^\pm_j)],  \nonumber \\
		R^j_{ri}(\tau,\mathbf{B})&=\lim_{N\to\infty}\mathbb{E}_{\mathbf{Y}^\pm_j}[\hat{u}_r(t_1, \mathbf{B}\!\odot\! \mathbf{Y}^\pm_j) \hat{u}_i(t_2, \mathbf{B}\!\odot\! \mathbf{Y}^\pm_j)],  \nonumber \\
		R^j_{ii}(\tau,\mathbf{B})&=\lim_{N\to\infty}\mathbb{E}_{\mathbf{Y}^\pm_j}[\hat{u}_i(t_1, \mathbf{B}\!\odot\! \mathbf{Y}^\pm_j) \hat{u}_i(t_2, \mathbf{B}\!\odot\! \mathbf{Y}^\pm_j)],
	\end{align*}
	where $\tau=t_2-t_1\in(-T,T)$. Then
	\begin{align*}
		&R^j_{rr}(\tau)= R^j_{ii}(\tau)= \frac{\sigma_b^2}{2}(\mathrm{sinc} (2F_s\tau)- \rho \mathrm{sinc} (2F_s\tau \rho)), \nonumber\\
		&R^j_{ri}(\tau)=
		\begin{dcases}
			\frac{\sigma_b^2}{2} \frac{1}{2\pi F_s \tau} (\frac{1}{\rho}\cos(2\pi F_s \rho \tau)\!-\! \cos(2\pi F_s \tau)) &\!\!\!\! \tau\neq 0 \\
			0 &\!\!\!\! \tau=0
		\end{dcases},
	\end{align*}
	with probability one. That is, the result holds for any $\mathbf{B}$ as $N\to\infty$ which is emphasized by omitting the argument $\mathbf{B}$ from the notation. Clearly, $R^j_{ri}(\tau)=R^j_{ir}(-\tau)$.
\end{lemma}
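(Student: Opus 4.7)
The plan is to exploit the additive structure of $u$ together with the independence of the remaining sign variables, reducing each covariance to a single-index sum over $k$ and then passing to the limit by a law of large numbers on the iid data symbols. First, I would decompose
\begin{equation*}
u(t,\mathbf{B}\odot\mathbf{Y}^\pm_j) = D(t) + \frac{1}{\sigma_b\sqrt{N}}\sum_{k=j+1}^{N-1} X_k B_k e^{i 2\pi F_s kt/N},
\end{equation*}
where $D(t)$ collects the $k\leq j$ terms (deterministic, given the already-decided signs and the current $\pm 1$) and the sum on the right carries all remaining randomness. Since this sum has zero mean under $\mathbb{E}_{\mathbf{Y}^\pm_j}$, the centered samples defined in~\eqref{eqn:centeredSamples} are simply its real and imaginary parts, and Rademacher orthogonality $\mathbb{E}[X_kX_l]=\delta_{kl}$ collapses each of $R^j_{rr}$, $R^j_{ii}$, $R^j_{ri}$ to a single sum over $k=j+1,\ldots,N-1$ of a product of real/imaginary parts of $B_k e^{i 2\pi F_s kt_m/N}$.

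The next step is to simplify these summands via the identity $\re{z_1}\re{z_2}=\tfrac12\re{z_1z_2^\ast}+\tfrac12\re{z_1z_2}$ and its analogues for the other real/imaginary combinations. Each summand then splits into a ``diagonal'' piece proportional to $|B_k|^2 e^{i 2\pi F_s k\tau/N}$, with $\tau=t_2-t_1$, plus a ``squared'' piece proportional to $B_k^2 e^{i 2\pi F_s k(t_1+t_2)/N}$. For the QAM-type constellations used in OFDM, the 90-degree rotational symmetry of $\mathcal{M}$ gives $\mathbb{E}[B^2]=\mathbb{E}[(iB)^2]=-\mathbb{E}[B^2]=0$, so a strong law kills the $B_k^2$-weighted sums in the limit and only the $\tau$-dependent diagonal piece survives.

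Taking $N\to\infty$ with $j=\rho N$ made exact by the rationality of $\rho$, what remains is a random-weighted Riemann sum with limit
\begin{equation*}
\frac{1}{N}\sum_{k=j+1}^{N-1} |B_k|^2 e^{i 2\pi F_s k\tau/N}\ \longrightarrow\ \sigma_b^2\!\int_\rho^1 e^{i 2\pi F_s\tau\lambda}\,d\lambda\quad\text{a.s.}
\end{equation*}
Separating this integral into real and imaginary parts reproduces exactly the sinc and cosine combinations in the statement, and the isolated value $R^j_{ri}(0)=0$ follows by continuity from the $\tau\neq 0$ formula.

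The main obstacle is establishing the two preceding limits almost surely in $\mathbf{B}$, rather than merely in mean square, since the summands depend on $N$ through their exponential weights and the textbook iid SLLN does not apply off the shelf. I would handle this by bounding the variance directly: the $B_k$ are bounded because the constellation is finite, so each empirical sum has variance $O(1/N)$, and a Borel--Cantelli argument along the countable sequence of admissible indices (countable because $\rho$ is rational) promotes $L^2$ convergence to the desired almost-sure statement. A cleaner alternative is a McDiarmid bounded-differences argument in the spirit of Theorem~\ref{thm:mcdiarmid}, which yields exponential tails summable in $N$.
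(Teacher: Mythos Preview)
Your reduction to a single-index sum via Rademacher orthogonality is exactly how the paper begins, but the two arguments diverge in how they show the limit is independent of $\mathbf{B}$. The paper treats the finite-$N$ covariance $h_N(\mathbf{B},t_1,t_2)$ as a random variable in $\mathbf{B}$, applies the CLT to obtain a Gaussian limit, and then computes the limiting mean and variance explicitly; the variance is shown to vanish (all non-leading terms are $O(1/N)$ sums of bounded quantities), so the limit is almost surely the constant $\mu(\tau)$. Your route is more structural: you split each summand into a $|B_k|^2$-weighted ``diagonal'' piece and a $B_k^2$-weighted piece, kill the latter using the constellation symmetry $\mathbb{E}[B^2]=0$, and identify the former as a random Riemann sum converging to the stated integral. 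This buys you a cleaner path to the closed-form expressions and, via McDiarmid, a genuine almost-sure statement; by contrast the paper's CLT-with-zero-variance argument is strictly speaking only convergence in distribution to a point mass, i.e.\ convergence in probability, so your concentration step actually tightens what the paper proves.

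One caution: your first proposed upgrade from $L^2$ to almost-sure convergence (``variance $O(1/N)$ plus Borel--Cantelli'') does not work as written, since Chebyshev with $O(1/N)$ variance gives tail bounds that are only harmonically summable. Your McDiarmid alternative does the job, so commit to that. Also, the value $R^j_{ri}(0)=0$ should be checked directly from the sum at $\tau=0$ rather than invoked ``by continuity''; whether the $\tau\neq 0$ formula extends continuously to $0$ depends on getting the constants right in the integral, and it is safer to verify it from the finite-$N$ expression.
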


The proof is given in Appendix~\ref{app:variances}. The following theorem characterizes the distribution of the OFDM signal.

\begin{theorem}
	\label{thm:convergence}
	For $\mathbf{B}$ randomly distributed in $\mathcal{M}^N$ and $j=\rho N$ as specified in Lemma~\ref{lem:variances}, consider $\hat{u}(t,\mathbf{B}\odot \mathbf{Y}^\pm_j)$ as defined in \eqref{eqn:centeredSamples} at any set of time instances $\{t_1,t_2,\ldots,t_K\}\in [0,T)^K, K>1$. Omitting $\mathbf{B}\odot \mathbf{Y}^\pm_j$ to save space, the vector
	\begin{equation}
		[\hat{u}_r(t_1), \hat{u}_i(t_1), \hat{u}_r(t_2), \hat{u}_i(t_2), \ldots, \hat{u}_r(t_{K}), \hat{u}_i(t_{K})]^T
		\label{eqn:cramerWoldSeq1}
	\end{equation}
	converges  in distribution, as $N\to\infty$, to the vector
	\begin{equation}
		[x_1, y_1, x_2, y_2, \ldots, x_{K}, y_{K}]^T
		\label{eqn:cramerWoldSeq2}
	\end{equation}
	of jointly Gaussian random variables with $\mathbb{E}[x_m x_n]=R^j_{rr}(t_n-t_m)$, $\mathbb{E}[y_m y_n]=R^j_{ii}(t_n-t_m)$ and $\mathbb{E}[x_m y_n]=R^j_{ri}(t_n-t_m)$ as given in Lemma~\ref{lem:variances}.
\end{theorem}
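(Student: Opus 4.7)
The natural route is the Cramér-Wold device combined with a Lindeberg--Feller central limit theorem applied conditionally on $\mathbf{B}$. By Lemma~\ref{lem:variances} the target covariance structure is already known with probability one; what remains is to promote the convergence of second moments to convergence in distribution.

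First I would fix any real coefficients $\boldsymbol{\alpha},\boldsymbol{\beta}\in\mathbb{R}^K$ and consider the scalar
\begin{equation*}
S_N=\sum_{m=1}^{K}\bigl(\alpha_m\,\hat{u}_r(t_m,\mathbf{B}\odot\mathbf{Y}^\pm_j)+\beta_m\,\hat{u}_i(t_m,\mathbf{B}\odot\mathbf{Y}^\pm_j)\bigr).
\end{equation*}
Because $\mathbf{Y}^\pm_j$ has its first $j+1$ entries deterministic and the remaining $X_{j+1},\ldots,X_{N-1}$ independent and equiprobable in $\{-1,+1\}$, the centering in \eqref{eqn:centeredSamples} removes exactly the deterministic portion of $u$, leaving
\begin{equation*}
\hat{u}(t,\mathbf{B}\odot\mathbf{Y}^\pm_j)=\frac{1}{\sigma_b\sqrt{N}}\sum_{k=j+1}^{N-1}B_k X_k e^{i\frac{2\pi}{N}F_s k t}.
\end{equation*}
Substituting this into $S_N$ and collecting terms indexed by $k$ gives $S_N=\sum_{k=j+1}^{N-1}Z_{N,k}$, where each $Z_{N,k}$ is a deterministic function (depending on $\mathbf{B}$, $\boldsymbol{\alpha}$, $\boldsymbol{\beta}$ and the $t_m$) of the single sign $X_k$. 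Since the $X_k$ are mutually independent, the $Z_{N,k}$ are independent, centered, and uniformly bounded by $C/\sqrt{N}$ with $C$ depending only on $K$, $\|\boldsymbol{\alpha}\|$, $\|\boldsymbol{\beta}\|$ and $\max_{x\in\mathcal{M}}|x|/\sigma_b$.

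Next I would invoke the Cramér-Wold theorem to reduce the multivariate claim to showing $S_N\Rightarrow\mathcal{N}(0,V)$, where $V$ is the bilinear form in $(\boldsymbol{\alpha},\boldsymbol{\beta})$ built from $R^j_{rr}$, $R^j_{ii}$, $R^j_{ri}$ evaluated at the differences $t_n-t_m$. The conditional variance
\begin{equation*}
\mathrm{Var}_{\mathbf{Y}^\pm_j}(S_N)=\sum_{m,n}\bigl(\alpha_m\alpha_n R^{j,(N)}_{rr}+\beta_m\beta_n R^{j,(N)}_{ii}+2\alpha_m\beta_n R^{j,(N)}_{ri}\bigr)
\end{equation*}
(where $R^{j,(N)}$ denotes the finite-$N$ covariance) converges to $V$ on the probability-one event supplied by Lemma~\ref{lem:variances}. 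Since $|Z_{N,k}|\le C/\sqrt{N}$, for any fixed $\varepsilon>0$ the indicator $\mathbf{1}\{|Z_{N,k}|>\varepsilon\sqrt{V}\}$ vanishes for all $k$ once $N$ is large (assuming $V>0$; the degenerate case $V=0$ reduces to $S_N\to 0$ in probability, which is immediate from the variance estimate). Hence the Lindeberg condition holds trivially and the Lindeberg--Feller theorem yields $S_N\Rightarrow\mathcal{N}(0,V)$.

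The main obstacle, and the reason I would spend most care here, is the interaction between the two sources of randomness: the statement of Lemma~\ref{lem:variances} is ``with probability one'' over $\mathbf{B}$, whereas the CLT is an assertion about $\mathbf{Y}^\pm_j$ conditional on $\mathbf{B}$. The clean way to package this is to condition on a realization of $\mathbf{B}$ lying in the intersection (over the countable dense set of rational $(\boldsymbol{\alpha},\boldsymbol{\beta})$ and of time pairs) of the full-measure events from Lemma~\ref{lem:variances}; on this still full-measure set, every required variance limit holds simultaneously, so the Lindeberg-Feller conclusion is valid and, by Cramér-Wold, upgrades to the multivariate convergence in distribution stated in the theorem. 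Everything else (boundedness of the summands, independence, vanishing mean) is built in by construction of $\mathbf{Y}^\pm_j$ and the normalization $1/\sqrt{N}$.
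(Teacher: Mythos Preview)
Your proposal is correct and follows essentially the same route as the paper, which merely outlines the proof: Cram\'er--Wold reduction to scalar linear combinations, verification of the Lindeberg condition for the resulting triangular array of independent bounded summands, and appeal to Lemma~\ref{lem:variances} for the limiting covariances. Your treatment is in fact more detailed than the paper's sketch, in particular your handling of the probability-one interaction between $\mathbf{B}$ and $\mathbf{Y}^\pm_j$ via a countable intersection of full-measure events, which the paper leaves implicit.
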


\begin{proof}
The proof follows a standard procedure and is only outlined here. It essentially consists of the application of the Cramer-Wold theorem \cite{billingsley1999} to the vector in \eqref{eqn:cramerWoldSeq1} which requires that every linear combination of the elements of the vector in  \eqref{eqn:cramerWoldSeq1} converges in distribution to the same linear combination of the corresponding elements of the vector in \eqref{eqn:cramerWoldSeq2}. This can be verified by the Lindeberg condition. In this procedure, the existence of the covariances of the linear combination is  shown in Lemma~\ref{lem:variances}.
\end{proof}

From Theorem~\ref{thm:convergence}, the following result is immediate for the discrete-time OFDM signal at iteration $j$.
\begin{corollary}
	\label{cor:normalDist}
	For any given~$\mathbf{b}\in\mathcal{M}^N$, $n=0,1,\ldots,LN-1$ and $j=\rho N$ as defined in Lemma~\ref{lem:variances}, it holds that
	\begin{equation*}
		\begin{bmatrix}
			s_r(n, \mathbf{b}\!\odot\! \mathbf{Y}^\pm_j)- \mu_r(n,\mathbf{b}\!\odot\! \mathbf{Y}^\pm_j) \\
			s_i(n, \mathbf{b}\!\odot\! \mathbf{Y}^\pm_j)- \mu_i(n,\mathbf{b}\!\odot\! \mathbf{Y}^\pm_j)
		\end{bmatrix} 
		\xrightarrow{d} \mathcal{N}(0,\frac{1}{2}(1-\rho) I),
		\label{eqn:jointNormalS}
	\end{equation*}
	where $\xrightarrow{d}$ denotes convergence in distribution, $I$ is a $2\times2$ identity matrix and 
	\begin{align*}
		\mu_r(n,\mathbf{b}\!\odot\! \mathbf{Y}^\pm_j)&\!=\!\frac{1}{\sigma_b\sqrt{N}}\mathrm{Re}\left\{ \pm b_j e^{i\frac{2\pi}{LN}jn}\! +\sum_{k=0}^{j-1} b_k x_k^\ast e^{i\frac{2\pi}{LN}kn}  \right\}, \nonumber \\
		\mu_i(n,\mathbf{b}\!\odot\! \mathbf{Y}^\pm_j)&\!=\!\frac{1}{\sigma_b\sqrt{N}}\mathrm{Im}\left\{ \pm b_j e^{i\frac{2\pi}{LN}jn}\!+\sum_{k=0}^{j-1} b_k x_k^\ast e^{i\frac{2\pi}{LN}kn}  \right\}.
	\end{align*}
\end{corollary}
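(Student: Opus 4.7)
The plan is to deduce the corollary as a direct specialization of Theorem~\ref{thm:convergence} to a single time instance. Because the discrete-time signal satisfies $s(n,\mathbf{b}\odot\mathbf{Y}^\pm_j)=u(n/(LF_s),\mathbf{b}\odot\mathbf{Y}^\pm_j)$ by \eqref{eq:ofdmSymbolDef}, it suffices to invoke Theorem~\ref{thm:convergence} with $K=1$ at the single time instance $t_1 = n/(LF_s)\in[0,T)$ and then to identify the two parameters of the resulting bivariate normal limit with the mean vector $(\mu_r,\mu_i)$ and the covariance $\tfrac{1}{2}(1-\rho)I$ stated in the corollary.

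First I would verify the centering. Expanding \eqref{eq:ofdmContSymbolDef} and taking expectation over $\mathbf{Y}^\pm_j=[x^\ast_0,\ldots,x^\ast_{j-1},\pm 1,X_{j+1},\ldots,X_{N-1}]^T$, the equiprobable $\pm 1$ random variables $X_{j+1},\ldots,X_{N-1}$ have zero mean, so every term with index $k\ge j+1$ drops out of the sum. What remains is precisely the decided contribution $\sum_{k=0}^{j-1} b_k x^\ast_k e^{i 2\pi k n/(LN)}$ together with the probe term $\pm b_j e^{i 2\pi j n/(LN)}$, all scaled by $\tfrac{1}{\sigma_b\sqrt{N}}$. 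Splitting into real and imaginary parts reproduces $\mu_r$ and $\mu_i$ verbatim, so the centered vector in the corollary coincides with the one featured in Theorem~\ref{thm:convergence}.

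Second I would evaluate the covariance structure by specializing Lemma~\ref{lem:variances} to $\tau = t_1 - t_1 = 0$. Using $\mathrm{sinc}(0) = 1$ and the standing signal-power normalization (which fixes the multiplicative constant in front of the lemma's expressions to one), the diagonal entries collapse to $R^j_{rr}(0) = R^j_{ii}(0) = \tfrac{1}{2}(1-\rho)$, while the cross-covariance is $R^j_{ri}(0) = 0$ by the explicit piecewise definition in the lemma. Combining this $2\times 2$ covariance structure with Theorem~\ref{thm:convergence} yields the claimed weak convergence to $\mathcal{N}(0,\tfrac{1}{2}(1-\rho)I)$.

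I do not anticipate a substantive obstacle, since the corollary is essentially a bookkeeping specialization. The only mild subtlety is that Lemma~\ref{lem:variances} and Theorem~\ref{thm:convergence} are phrased with a random data vector $\mathbf{B}$ and assert their conclusions with probability one as $N\to\infty$, whereas the corollary is stated for a deterministic $\mathbf{b}$. This transition is immediate because both results are explicitly declared to hold for \emph{any} realization of $\mathbf{B}$ in the large-$N$ limit, so replacing $\mathbf{B}$ by a fixed $\mathbf{b}$ is legitimate.
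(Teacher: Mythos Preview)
Your proposal is correct and follows exactly the approach the paper intends: the paper gives no separate proof and simply declares the corollary ``immediate'' from Theorem~\ref{thm:convergence}, and you have spelled out precisely that deduction (the identification $s(n,\cdot)=u(n/(LF_s),\cdot)$, the explicit centering, and the evaluation of Lemma~\ref{lem:variances} at $\tau=0$). One cosmetic wrinkle: Theorem~\ref{thm:convergence} is stated for $K>1$, so if you wish to be fastidious you can invoke it with $K=2$ and project onto the first pair of coordinates, which recovers the $K=1$ marginal you need.
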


\begin{remark}
\label{rmk:sameInTheLimit}
	A pivotal result which enables the analytical derivations in the remainder of this paper is that at every iteration of the algorithm, the distribution of $\hat{u}(t,\mathbf{b}\odot\mathbf{Y}^\pm_j)$ in the limit is independent of $\mathbf{b}$. In addition, the distribution of $u(t,\mathbf{b}\odot\mathbf{X})$, i.e. prior to any sign decision, is identical  to that of $u(t,\mathbf{B})$ as $N\to\infty$.
\end{remark}

\begin{remark}
\label{remarkApprox}
In the following sections, the asymptotically Gaussian distribution shown in Corollary~\ref{cor:normalDist} is used to approximate  the distribution of $s(n,\mathbf{b}\odot\mathbf{Y}_j^\pm)$ for the finite but large enough number of random sign variables, i.e. $N-j-1$ at iteration $j$.  This can be used to derive closed-form expressions of the sign decision criterion \eqref{eqn:CEGeneralRule} only for $j=0,1,\ldots,N-N_e-1$. The number of the excluded signs $N_e$, for which the approximation is unacceptable, will be determined based on simulations in Section~\ref{sec:perf}. 
\end{remark}

\subsection{SE Metric}
\label{sec:LSEcalc}

By substituting $\zeta_N(\mathbf{b}\odot\mathbf{Y}^\pm_j)$  for $f(\mathbf{b}\odot\mathbf{Y}^\pm_j)$  in \eqref{eqn:gDef2}, we have
\begin{align}
	g^{\pm}_j(\mathbf{b})=  \sum_{n=0}^{LN-1} \mathbb{E} \left[e^{  \kappa |s(n,\mathbf{b}\odot\mathbf{Y}^\pm_j)|^2}\right].
	\label{eqn:CEsForSE}
\end{align}
It was shown in Corollary~\ref{cor:normalDist} that the real and imaginary components of $s(n,\mathbf{b}\odot \mathbf{Y}^\pm_j)$ are  Gaussian and independent in the limit with equal variances. For $j=\rho N$, let $\delta_j^2=R_{rr}(0)=R_{ii}(0)=\frac{1}{2}(1-\rho)$ as obtained in Lemma~\ref{lem:variances}. Here we apply the approximation suggested in Remark~\ref{remarkApprox} for $j=0, \ldots, N-N_e-1$. Specifically, the real and imaginary parts of
\begin{equation}
	z(n, \mathbf{b}\odot \mathbf{Y}^\pm_j)=\delta_j^{-1} s(n, \mathbf{b}\odot \mathbf{Y}^\pm_j), \ j=0,\ldots,N-N_e-1
	\label{eqn:Zdef}
\end{equation}
have approximately unit variances with accordingly scaled expected values. Therefore, $|z(n, \mathbf{b}\odot \mathbf{Y}^\pm_j)|^2$ for large enough $N-j$ is approximately a non-central $\chi^2$-distributed random variable with two degrees of freedom. Consider the moment generating function of $|z(n, \mathbf{b}\odot \mathbf{Y}^\pm_j)|^2$ which is 
\begin{align*}
	M^\pm_{j,n}(t)&= \mathbb{E}\left[e^{t |z(n, \mathbf{b}\odot \mathbf{Y}^\pm_j)|^2}\right] \\
	&=e^{\lambda^\pm_{j,n} t (1-2t)^{-1}} (1-2t)^{-1} \quad 2t<1,
\end{align*}
where  the non-centrality parameter $\lambda^\pm_{j,n}$ is
\begin{equation*}
	\lambda^\pm_{j,n}=\delta_j^{-2 }\left(\mu_r^2(n,\mathbf{b}\odot \mathbf{Y}^\pm_j)+\mu_i^2(n,\mathbf{b}\!\odot\! \mathbf{Y}^\pm_j)\right),
\end{equation*}
and $\mu_r$ and $\mu_i$ were given in Corollary~\ref{cor:normalDist}. It can be seen that the terms in~\eqref{eqn:CEsForSE} are identical to the definition of $M^\pm_{j,n}(\kappa\delta_j^2)$. Consequently,
\begin{align}
	g^{\pm}_j(\mathbf{b})=(\kappa\delta_j^2)^{-1}\beta \sum_{n=0}^{LN-1} e^{\beta \lambda^\pm_{j,n}},\quad j=0,\ldots,N-N_e-1,
	\label{eqn:LSEcalc}
\end{align}
where $\beta=\kappa\delta_j^2(1-2\kappa\delta_j^2)^{-1}$. \saeedDissertation{ It is worth mentioning that in derivations for both SRCM and LSE we have used a standard chi-squared variable, hence the same $\lambda_n$, but in SRCM it was possible to establish the connection between the original quantity to be calculated and the MGF by a scalar. Here since we are not working with expected value of the r.v. itself, but it's exp, we have to do the job by choosing $\gamma$.} Finally, a closed-form decision rule can be obtained as
\begin{align}
	x_j^\ast=-\mathrm{sign} \left[\sum_{n=0}^{LN-1} \left(e^{\beta \lambda^+_{j,n}} -  e^{\beta \lambda^-_{j,n}}\right)\right], \quad j=0,\ldots,N-N_e-1.
	\label{eqn:LSErule}
\end{align}
The number of the last sign decisions which do not follow  the closed-form expression in \eqref{eqn:LSEcalc}, i.e. $N_e$, will be determined in Section~\ref{sec:perf}. A sample average must be inevitably used instead for signs $j=N-N_e, \ldots, N-1$ as in~\eqref{eq:estimatorGeneral}.

\saeedLater{sth wrong here. LSE with the log is quite good, for very large or small $N$, it gives a very good approximation with a tiny MSE. but it's not the case at all without the log. why? and then why the algorithm works without the log?!!!}

\begin{algorithm}[t]
\caption{Implementation of the CE Method for CM reduction by Sign Selection. \saeed{make connections to between equations derived and the calculations here} \saeed{make it for SE as it comes first and is more general}}
	\label{alg:ce-cm}
	\begin{algorithmic}[1]
		\REQUIRE $b_0, \ldots, b_{N_f-1}$: data symbols with $\log_2 |\mathcal{M}|$-bit mapping  \\
		$b_{N_f}, \ldots, b_{N-1}$: data symbols with $(\log_2 |\mathcal{M}| - 1$)-bit mapping  \\
		\STATE $\mathbf{x}^\ast \gets [1, 1, \ldots, 1]_{N\times 1}$
		\STATE $\mathbf{n} \gets [0, 1, 2,\ldots, LN-1]$
		\STATE $\mathbf{h} \gets \sum_{j=0}^{N_f-1} b_j \exp(i 2\pi j \frac{\mathbf{n}}{LN})$ 
		\\\COMMENT{Element-wise operations on arrays are assumed in this pseudocode.}
		\FOR {$j = N_f$ to $N-1$}
			\STATE $\mathbf{p} \gets \mathbf{h}+  b_j \exp(i 2\pi j \frac{\mathbf{n}}{LN})$
			\STATE $\mathbf{m} \gets \mathbf{h}-  b_j \exp(i 2\pi j \frac{\mathbf{n}}{LN})$
			\STATE $x_j^\ast \gets -\mathrm{sign}(\mathrm{sum}(|\mathbf{p}|^6\! + 18|\mathbf{p}|^4 +72|\mathbf{p}|^2 -|\mathbf{m}|^6\! - 18|\mathbf{m}|^4\! - 72|\mathbf{m}|^2))$
			\IF{$x_j^\ast=1$}
				\STATE $\mathbf{h} \gets \mathbf{p}$
			\ELSE
				\STATE $\mathbf{h}\gets \mathbf{m}$
			\ENDIF
		\ENDFOR
		\RETURN $\mathbf{x}^\ast$
	\end{algorithmic}
\end{algorithm}

\subsection{Cubic Metric}
\label{sec:CM}

Replacing $f(\mathbf{b}\odot\mathbf{Y}^\pm_j)$ with $\eta_{\scriptscriptstyle N}(\mathbf{b}\!\odot\! \mathbf{Y}^\pm_j)$ in \eqref{eqn:gDef2}, we have \cite{AfrasiabiSPAWC2019}
\begin{equation}
	g_j^\pm(\mathbf{b})=\frac{1}{LN} \sum_{n=0}^{LN-1} \mathbb{E} \left[\left|s(n, \mathbf{b}\odot \mathbf{Y}^\pm_j)\right|^6\right].
	\label{eqn:CMexpCompact}
\end{equation}
The expected values are the third moments of $|s(n,\mathbf{b}\odot \mathbf{Y}^\pm_j)|^2$. Following the approximation mentioned in Remark~\ref{remarkApprox}, they can be obtained from the  third derivative of the moment generating function of the $\chi^2$ random variable $|z(n, \mathbf{b}\odot \mathbf{Y}^\pm_j)|^2$ as defined in~\eqref{eqn:Zdef}. That is,
\begin{equation*}
	\mathbb{E} \left[\left|s(n, \mathbf{b}\!\odot\! \mathbf{Y}^\pm_j)\right|^6\right]= \delta_j^6 \ \frac{d^3 M^\pm_{j,n}(t)}{dt^3}\Big|_{t=0}, \ j=0,\ldots,N-N_e-1.
	\label{eqn:expSinZ}
\end{equation*}
Obtaining the derivative and substituting it in \eqref{eqn:CMexpCompact}, we have \cite{AfrasiabiSPAWC2019}
\begin{align*}
	g_j^\pm(\mathbf{b})\!= \!\frac{\sigma^6}{LN}  \!\sum_{n=0}^{LN-1}\!\! \left[(\lambda^\pm_{j,n})^3 \!+ \!18(\lambda^\pm_{j,n})^2 + 72\lambda^\pm_{j,n}+ \!48\right], \quad j=0,\ldots,N-N_e-1,
\end{align*}
and the decision rule in \eqref{eqn:CEGeneralRule} can be written in closed form as
\begin{align}
	x_j^\ast&= -\mathrm{sign}\left(\sum_{n=0}^{LN-1} \left[(\lambda^+_{j,n})^3 + 18(\lambda^+_{j,n})^2 + 72\lambda^+_{j,n}  -(\lambda^-_{j,n})^3 - 18(\lambda^-_{j,n})^2 - 72\lambda^-_{j,n}\right] \right)
	\label{eqn:finalRule}
\end{align}
for $j=0,\ldots,N-N_e-1$. For the sign variables $j=N-N_e, \ldots, N-1$, consider using sample averages as in~\eqref{eq:estimatorGeneral} with a high $Q$, which was the number of realizations of the random sign variables to calculate the conditional expectations.  \saeed{The following to the last chapter?}Simulations have shown that the CE Method delivers the same performance for several nonzero values of $N_e$ as for $N_e=0$. That is, using accurate sample averages for the final sign variables does not improve the performance. 
 
 The application of the CE Method to the Sign Selection problem essentially leads to  the explicit sign decision criteria derived in this section for PAPR and its substitute SE as well as for the SRCM. For better readability, the pseudocode for SRCM reduction is shown in Algorithm~\ref{alg:ce-cm}, where the expected values required for obtaining $\lambda_{j,n}^\pm$ are constructed by adding the contribution of one subcarrier at each iteration (see lines 5 and 6).

\section{Performance analysis}
\label{sec:analysis}
\saeed{compare this with the upperbounds from other papers? how?}
The CE Method guarantees \eqref{eqn:CEguarantee}, which is rewritten here for convenience: \saeedDissertation{use the updated one for pruned CE}
\begin{equation*}
	f(\mathbf{b}\odot\mathbf{x}^\ast) \leq \mathbb{E} [f(\mathbf{b}\odot\mathbf{X})],
	\label{eq:CEprincipleB}
\end{equation*}
for a given $\mathbf{b}$. \saeedDissertation{It implies that
\begin{equation}
	\mathrm{Pr}(f(\mathbf{b}\odot\mathbf{x}^\ast)>\rho) \leq \mathrm{Pr}(\mathbb{E}_\mathbf{X} [f(\mathbf{c}\odot\mathbf{X})]>\rho).
	\label{eqn:CEguaranteedB}
\end{equation}
That is, the CCDF of the reduced metric is smaller or equal to that of the initial expectation.} In order to characterize $f(\mathbf{b}\odot\mathbf{x}^\ast)$, one approach can be to establish a relation between the distribution of the initial expectation $\mathbb{E}_\mathbf{X} [f(\mathbf{B}\odot\mathbf{X})]$ and that of the uncoded metric values $f(\mathbf{B}), \mathbf{B}\in\mathcal{M}^N$. The analysis will be done for PAPR and SRCM with the help of some useful results from the literature. Concerning the SE metric, a relevant analysis would include a relation between SE reduction and the resulting indirect PAPR reduction, which requires further research.


\subsection{PAPR metric}

\begin{theorem}
	\label{thm:PAPRupper}
	For any $\mathbf{b}\in\mathcal{M}^N$, the reduced PAPR value $\theta_N(\mathbf{b}\odot\mathbf{x}^\ast)$ obtained by the CE Method is bounded  in the limit as
	\begin{align*}
		\lim_{N\to\infty} \frac{\theta_N(\mathbf{b}\odot\mathbf{x}^\ast) - a_N}{b_N} \leq \gamma,
	\end{align*}
	where $b_N=\frac{1}{2}$, $a_N= 2 \log N + \log \log N + \log \frac{\pi}{3}$
	and $\gamma \approx 0.577$ is the Euler constant. \saeedDissertation{include $N_f$. proof is easy, see the notebook}
\end{theorem}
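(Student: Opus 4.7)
The plan is to leverage the fundamental CE guarantee~\eqref{eqn:CEguarantee} together with the classical extreme-value asymptotics of the peak of an OFDM signal. By~\eqref{eqn:CEguarantee} applied to $f=\theta_N$, for any fixed $\mathbf{b}\in\mathcal{M}^N$ the sign vector produced by the algorithm satisfies
\[
\theta_N(\mathbf{b}\odot\mathbf{x}^\ast) \;\leq\; \mathbb{E}_{\mathbf{X}}\!\left[\theta_N(\mathbf{b}\odot\mathbf{X})\right],
\]
so the task reduces to controlling the right-hand side as $N\to\infty$.

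Next I would invoke Remark~\ref{rmk:sameInTheLimit}: prior to any sign decision, the distribution of $u(t,\mathbf{b}\odot\mathbf{X})$ coincides asymptotically with that of $u(t,\mathbf{B})$, and by Corollary~\ref{cor:normalDist} with $\rho=0$ the sampled signal is jointly circularly-symmetric complex Gaussian in the limit. Consequently the right-hand side above becomes independent of the particular $\mathbf{b}$ and, in the limit, equals $\mathbb{E}[\theta_N(\mathbf{B})]$, the expected uncoded PAPR. This disposes of the dependence on $\mathbf{b}$ and reduces Theorem~\ref{thm:PAPRupper} to an asymptotic estimate of $\mathbb{E}[\theta_N(\mathbf{B})]$.

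For this estimate I would appeal to the extreme-value-type asymptotics for the OFDM peak from \cite{WunderPeakValue2003}: with the centering $a_N = 2\log N+\log\log N+\log(\pi/3)$ and the scaling $b_N=\tfrac12$, the normalized peak
\[
\frac{\theta_N(\mathbf{B})-a_N}{b_N}
\]
converges in distribution to a standard Gumbel random variable $G$ with CDF $\exp(-e^{-x})$. Since $\mathbb{E}[G]=\gamma$, converting this distributional convergence into convergence of expectations yields $(\mathbb{E}[\theta_N(\mathbf{B})]-a_N)/b_N\to\gamma$, and combining with the two previous steps gives the claimed bound.

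The main technical obstacle is precisely this last step: passing from convergence in distribution to convergence in expectation requires a uniform integrability argument. I would establish this by bounding a higher moment of $(\theta_N(\mathbf{B})-a_N)/b_N$ uniformly in $N$, exploiting the exponential tails of $|s(n,\mathbf{B})|^2$ for jointly Gaussian samples together with a union bound over the $LN$ time indices to dominate the maximum. A secondary, minor subtlety is that Remark~\ref{rmk:sameInTheLimit} gives distributional equivalence in the limit, so one needs that the pre-limit expectations $\mathbb{E}_{\mathbf{X}}[\theta_N(\mathbf{b}\odot\mathbf{X})]$ converge to $\mathbb{E}[\theta_N(\mathbf{B})]$ at the same $a_N,b_N$ scale; this can again be handled by a uniform integrability estimate, since the Gaussian-limit covariance structure in Lemma~\ref{lem:variances} at $\rho=0$ matches that of $u(t,\mathbf{B})$ and the deviations of the finite-$N$ distribution from Gaussian are of lower order in the tail that drives the maximum.
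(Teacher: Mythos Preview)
Your approach is essentially the paper's: combine the CE guarantee~\eqref{eqn:CEguarantee} with Remark~\ref{rmk:sameInTheLimit} and the Gumbel extreme-value asymptotics of the OFDM peak, then read off the mean $\gamma$. Two minor differences are worth flagging. First, the paper does not apply extreme-value theory to the discrete-time $\theta_N$ directly; it introduces the continuous-time peak $\xi_N(\mathbf{b})=\max_{t\in[0,T)}|u(t,\mathbf{b})|^2$, uses the trivial bound $\theta_N\leq\xi_N$, and then invokes the Gumbel limit for $\xi_N$ from \cite{WeiKelly2002}. This sidesteps any question of how the $LN$ discrete samples interact with the EVT normalization. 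Second, you are more explicit than the paper about the passage from convergence in distribution to convergence of expectations: the paper simply writes ``the expected value of a random variable with the Gumbel distribution is $\gamma$'' and takes the limit inside the expectation without further comment, whereas you correctly identify uniform integrability as the missing ingredient and sketch how to obtain it. So your proposal is sound; if anything, it is slightly more careful on the integrability side, while the paper's $\xi_N$ detour aligns the argument more cleanly with the form in which the cited EVT result is stated.
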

\begin{proof}
	Consider the PAPR of the continuous-time OFDM symbols $u(t,\mathbf{b})$ given in~\eqref{eq:ofdmContSymbolDef} which is defined as
	\begin{align*}
		\xi_N(\mathbf{b})=\max_{t\in[0,T)} |u(t,\mathbf{b})|^2.
	\end{align*}
	Clearly, for any finite oversampling factor $L$, 
	\begin{align*}
		\theta_N(\mathbf{b})\leq \xi_N(\mathbf{b}).
	\end{align*}
	Therefore, It directly follows from~\eqref{eqn:CEguarantee} that
	\begin{align*}
		 \frac{\theta_N(\mathbf{b}\odot\mathbf{x}^\ast) - a_N}{b_N} \leq \mathbb{E}_\mathbf{X} \left[\frac{\xi_N(\mathbf{b}\odot\mathbf{X}) - a_N}{b_N} \right]
	\end{align*}
	for any $N$. Therefore \cite{Rudin1987},
	\begin{align}
		 \lim_{N\to\infty} \frac{\theta_N(\mathbf{b}\odot\mathbf{x}^\ast) - a_N}{b_N} \leq \lim_{N\to\infty} \mathbb{E}_\mathbf{X} \left[\frac{\xi_N(\mathbf{b}\odot\mathbf{X}) - a_N}{b_N} \right].
		 \label{eq:PAPRupperIntermStep}
	\end{align}
	In order to obtain the right hand side limit, recall that the covariance functions of $u(t,\mathbf{b}\odot\mathbf{X})$, as emphasized in Remark~\ref{rmk:sameInTheLimit}, was shown to be identical to that of $u(t,\mathbf{B})$ as $N\to\infty$. In addition, Extreme Value Theory \cite{leadbetter1988} has been employed in \cite{WeiKelly2002} to obtain the asymptotic distribution of $\xi_N(\mathbf{B})$ as
	\begin{align*}
		\lim_{N\to\infty} \mathbb{P} \left( \frac{\xi_N(\mathbf{B}) - b_N}{a_N} \leq w \right) = e^{-e^{-w}}.
	\end{align*}
	That is, the appropriately shifted and scaled variable $\xi_N(\mathbf{B})$ has Gumbel distribution in the limit. Consequently, the results of \cite{WeiKelly2002} hold for the asymptotic distribution of $u(t,\mathbf{b}\odot\mathbf{X})$ as well and 
	\begin{align*}
		\lim_{N\to\infty} \mathbb{P} \left( \frac{\xi_N(\mathbf{b}\odot \mathbf{X}) - b_N}{a_N} \leq w \right) = e^{-e^{-w}}.
	\end{align*}
	Finally, the expected value of a random variable with the Gumbel distribution is
	\begin{align*}
		 \lim_{N\to\infty} \mathbb{E}_\mathbf{X} \left[\frac{\xi_N(\mathbf{b}\odot\mathbf{X}) - a_N}{b_N} \right]=\gamma,
	\end{align*}
	which is the right hand side of~\eqref{eq:PAPRupperIntermStep}. This completes the proof.
\end{proof}

The asymptotic result in Theorem~\ref{thm:PAPRupper} shows an upperbound for $\theta_N(\mathbf{B}\odot\mathbf{x}^\ast)$ when shifted by $a_N$ which grows with $N$. This implies an approximate inequality for finite but large $N$, as stated below.
\begin{remark} 
	For large enough $N$, Theorem~\ref{thm:PAPRupper} implies the upperbound 
	\begin{align}
		\theta_N(\mathbf{b}\odot\mathbf{x}^\ast) \leq \log N + \frac{1}{2} \log \log N + K
		\label{eqn:upperPAPR}
	\end{align}
	for any $\mathbf{b}\in\mathcal{M}^N$, where $K=\frac{1}{2} \log \frac{\pi}{3}+ \gamma \simeq 0.59$.
\end{remark}

Since the upperbound of Theorem~\ref{thm:PAPRupper} holds for every $\mathbf{b}\in\mathcal{M}^N$, it is equivalently an upperbound on the worst-case reduced PAPR value, i.e. $\theta_{N}^{\mathrm{max}}=\max_{\mathbf{b}\in\mathcal{M}^N} \theta_N(\mathbf{b}\odot\mathbf{x}^\ast)$.  Except for relatively small $N$, it is not feasible to observe $\theta_N^{\max}$ in the actual performance investigation by computer simulations or in practice. Instead, it is common to measure the \emph{effective reduced PAPR} $\theta_N^{\mathrm{eff}}$ which is defined according to
\begin{equation}
	\mathbb{P}\{\theta_N(\mathbf{B}\odot\mathbf{x}^\ast)> \theta_N^{\mathrm{eff}}\}=0.001.
	\label{eq:effPAPRdef}
\end{equation} Although it can intuitively be expected that $\theta_{N}^{\mathrm{eff}}$ is much smaller than $\theta_N^{\max}$, a formal relation is not available.


\saeedLater{isn't what gerhard alwyas says that the PAPR for large N deviates arbitrarily small from log N in contradiction with that maximum PAPR is always N? log N and N don't get close, do they?}

\subsection{Cubic Metric}


\saeed{See comments of review 2 of the conf version}
The following theorem was previously presented by the authors in \cite{AfrasiabiSPAWC2019}, which obtains a constant upperbound on the reduced SRCM value. 

\begin{theorem}
	\label{thm:SRCMupperbound}
	The reduced SRCM value $\eta_{\scriptscriptstyle N}(\mathbf{b}\odot\mathbf{x}^\ast)$  obtained by the CE Method is bounded in the limit as
	\begin{align}
		\lim_{N\to\infty}\eta_{\scriptscriptstyle N}(\mathbf{b}\odot\mathbf{x}^\ast) \leq 6
		\label{eq:SRCMupper}
	\end{align}
	for any $\mathbf{b}\in\mathcal{M}^N$. \saeedDissertation{$N_f$}
\end{theorem}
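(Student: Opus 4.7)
The plan is to combine the general CE Method guarantee in~\eqref{eqn:CEguarantee} with the asymptotic Gaussianity of the signal samples established in Corollary~\ref{cor:normalDist}. Applying~\eqref{eqn:CEguarantee} with $f=\eta_{\scriptscriptstyle N}$ and expanding by linearity of expectation gives, for every $\mathbf{b}\in\mathcal{M}^N$ and every $N$,
\begin{equation*}
	\eta_{\scriptscriptstyle N}(\mathbf{b}\odot\mathbf{x}^\ast) \;\leq\; \mathbb{E}_{\mathbf{X}}\!\left[\eta_{\scriptscriptstyle N}(\mathbf{b}\odot\mathbf{X})\right] \;=\; \frac{1}{LN}\sum_{n=0}^{LN-1}\mathbb{E}\!\left[|s(n,\mathbf{b}\odot\mathbf{X})|^6\right],
\end{equation*}
so it suffices to show that the right hand side tends to $6$ as $N\to\infty$.

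To identify the limit for each summand, I would invoke Corollary~\ref{cor:normalDist} at $j=0$, which corresponds to $\rho=0$ and makes the conditional means $\mu_r,\mu_i$ there vanish. This yields convergence in distribution of $(s_r(n,\mathbf{b}\odot\mathbf{X}),s_i(n,\mathbf{b}\odot\mathbf{X}))$ to independent $\mathcal{N}(0,\tfrac{1}{2})$ random variables, so that $|s(n,\mathbf{b}\odot\mathbf{X})|^2$ converges in distribution to a unit-mean exponential variable $E$. Since $\mathbb{E}[E^3]=3!=6$, this pinpoints $6$ as the natural candidate limit for each term of the Cesaro average, consistent with the independence from $\mathbf{b}$ emphasised in Remark~\ref{rmk:sameInTheLimit}.

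The main obstacle is upgrading convergence in distribution to convergence of the sixth absolute moment, and doing so uniformly in the sample index $n$ so that the limit passes inside the Cesaro average over the $LN$ samples. I would handle this by uniform integrability: since $X_k\in\{-1,1\}$ and $|b_k|\leq\max_{x\in\mathcal{M}}|x|$, a Khintchine-type inequality for Rademacher sums bounds $\mathbb{E}[|s(n,\mathbf{b}\odot\mathbf{X})|^{6+\varepsilon}]$ by a constant independent of $N$ and $n$. Standard uniform-integrability criteria then promote convergence in distribution to convergence of the sixth moment uniformly in $n$, which lets the Cesaro mean inherit the same limit and yields $\mathbb{E}[\eta_{\scriptscriptstyle N}(\mathbf{b}\odot\mathbf{X})]\to 6$. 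An alternative path that avoids any tail control is a direct combinatorial evaluation: expanding $|s(n,\mathbf{b}\odot\mathbf{X})|^6$ as a sixfold index sum and using $X_k^2=1$ together with mutual independence of the $X_k$, only fully paired index configurations survive in expectation; the leading Wick pairings contribute exactly $6$ while the remaining terms are of order $1/N$ after the outer summation over $n$, which closes the proof of~\eqref{eq:SRCMupper}.
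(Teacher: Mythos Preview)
Your proposal is correct and follows the same high-level strategy as the paper: apply the CE guarantee~\eqref{eqn:CEguarantee} and then show that the initial expectation $\mathbb{E}_\mathbf{X}[\eta_{\scriptscriptstyle N}(\mathbf{b}\odot\mathbf{X})]$ tends to~$6$. The paper's proof differs in execution: rather than computing the limit directly, it invokes Remark~\ref{rmk:sameInTheLimit} to identify $\lim_{N\to\infty}\mathbb{E}_\mathbf{X}[\eta_{\scriptscriptstyle N}(\mathbf{b}\odot\mathbf{X})]$ with $\lim_{N\to\infty}\mathbb{E}[\eta_{\scriptscriptstyle N}(\mathbf{B})]$ and then cites~\cite{KimCubicMetric2016} for the value~$6$. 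Your route is more self-contained, and by bringing in Khintchine-type moment bounds (or the direct combinatorial expansion) you make explicit the passage from convergence in distribution to convergence of the sixth moment---a step the paper leaves implicit in its appeal to Remark~\ref{rmk:sameInTheLimit}. One small imprecision: at $j=0$ the means $\mu_r,\mu_i$ in Corollary~\ref{cor:normalDist} are $O(N^{-1/2})$ rather than identically zero, but this is immaterial in the limit.
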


\begin{proof}
	\saeedLater{we don't need this, we can build the result on the fixed c case in the same manner, reyleigh and ...} As stated in~\eqref{eqn:CEguarantee}, the CE Method guarantees that
	\begin{equation}
		\eta_{\scriptscriptstyle N}(\mathbf{b}\odot\mathbf{x}^\ast) \leq \mathbb{E}_\mathbf{X} [\eta_{\scriptscriptstyle N}(\mathbf{b}\odot\mathbf{X})]. 
		\label{eq:CEguaranteeCM}
	\end{equation}
	From the definition of SRCM in~\eqref{eqn:SRCM}, we have $\mathbb{E}[\eta_{\scriptscriptstyle N}(\mathbf{b})]=\frac{1}{LN}\sum_{n=0}^{LN-1} \mathbb{E}[|s(n,\mathbf{b})|^6]$.
In addition, it can be concluded from Remark~\ref{rmk:sameInTheLimit}  that the distribution of the discrete-time signal $s(n,\mathbf{b}\odot\mathbf{X})$ in the limit is the same as that of $s(n,\mathbf{B})$. Therefore,
	\begin{equation}
		\lim_{N\to\infty} \mathbb{E}_\mathbf{X} [\eta_{\scriptscriptstyle N}(\mathbf{b}\odot\mathbf{X})] = \lim_{N\to\infty} \mathbb{E} [\eta_{\scriptscriptstyle N}(\mathbf{B})].
		\label{eqn:Elim}
	\end{equation}
	The distribution of $\eta_{\scriptscriptstyle N}(\mathbf{B})$ is studied in \cite{KimCubicMetric2016}, where it is 
shown that
\begin{equation*}
	\lim_{N\to\infty}\mathbb{E}[\eta_{\scriptscriptstyle N}(\mathbf{B})] = 6.
	\label{eqn:RCMlimit}
\end{equation*}
Considering \eqref{eq:CEguaranteeCM}, \eqref{eqn:Elim} and that an inequality between two sequences is preserved in their limits \cite{Rudin1987}, we have
\begin{align*}
		\lim_{N\to\infty}\eta_{\scriptscriptstyle N}(\mathbf{b}\odot\mathbf{x}^\ast) \leq \lim_{N\to\infty}\mathbb{E}[\eta_{\scriptscriptstyle N}(\mathbf{B})],
	\end{align*}
which completes the proof.
\end{proof}

\saeedDissertation{\begin{corollary}
	Let $\mathrm{RCM}^\ast$ belong to the OFDM signal with reduced SRCM for individual OFDM symbols according to the proposed algorithm. Then $\mathrm{RCM}^\ast \leq \mathrm{RCM}$ as $N\to\infty$. 
\end{corollary}
\begin{proof}
As seen before, we have $\mathrm{RCM}^2=\mathbb{E} [\eta_{\scriptscriptstyle N}(\mathbf{B})]=6$ as $N\to\infty$. The same derivation holds for the SRCM-reduced OFDM signal, i.e. $h_r(n)= \sum_{m=0}^{\infty} s(n-mLN,\mathbf{C}_m \odot \mathbf{x}_m^\ast))$, where $\mathbf{x}^\ast_m$ is the solution obtained for $\mathbf{C}_m$. Therefore, it follows from Theorem~\ref{thm:SRCMupperbound} that 
\begin{align*}
	\lim_{N\to\infty} \mathbb{E}[\eta_{\scriptscriptstyle N}(\mathbf{C} \odot \mathbf{x}^\ast)]\leq \lim_{N\to\infty }\mathbb{E}[\eta_{\scriptscriptstyle N}(\mathbf{B})],
\end{align*}
which completes the proof.
\end{proof}}

Recall  that the reduction of RCM is the actual objective sought in reduction of SRCM and that CM is related to RCM by some constants. Clearly, Theorem~\ref{thm:SRCMupperbound} shows an upperbound on the largest or worst-case reduced SRCM. Being equal to the average of the reduced SRCM values, RCM can be expected to be much smaller than the upperbound unless the distribution of $\eta_{\scriptscriptstyle N}(\mathbf{B}\odot\mathbf{x}^\ast)$ is highly concentrated. Similar to the relation of the effective reduced PAPR and the upperbound, further characterization of RCM reduction is not available.

\section{Simulation Results and Discussion}
\label{sec:perf}

In this section, the performance of the CE method in reducing PAPR and CM is examined via simulation results. The performance here refers to the reduction in the metrics of interest achieved by the suboptimal solution to the Sign Selection problem, including the indirect PAPR reduction gained by applying the proposed method to the SE metric.

\saeedDissertation{High dynamic range of the signal, represented by PAPR, CM or any alternative metric, causes distortion in output signal of the power amplifier. The final assessment of the distortion is by the performance degradation in detection, e.g. bit error rate (BER), and the out-of-band (OOB) radiation, e.g. adjacent channel leakage ratio (ACLR). However, in this work no deliberate distortion, for instance by clipping and filtering type of work, is introduced. Therefore, the relation between the reduction gained in PAPR or CM to the improvement on performance and OOB radiation does not contribute to the problem statement of the paper and is out of the scope. The results of this work are presented by comparing CCDF curves of the metrics. For convenience, \emph{effective PAPR} defined as the PAPR value for which the CCDF equals~$10^{-3}$ is reported in some cases.}

%
%

%

\subsection{PAPR reduction}


The Complementary Cumulative Distribution Function (CCDF) of $\theta_N(\mathbf{B})$, i.e. $\mathbb{P}(\theta_N(\mathbf{B})>y) $, with $\mathbf{B}\in\mathcal{M}^N$ is commonly used to represent the \emph{uncoded PAPR}, i.e. the PAPR of an unprocessed signal. Accordingly, the reduction performance is reported by the CCDF of  $\theta_N(\mathbf{B}\odot\mathbf{x}^\ast)$ for $\mathbf{B}\in \mathcal{M}^N$. To report the performance in the text,  the \emph{effective PAPR}  is used which is the PAPR value where CCDF equals 0.001.  

\paragraph{Choice of $Q$} To investigate the reliability of the estimations required in the sign selection rule for PAPR given in \eqref{eq:estimator}, the reduction performance gained by several values of $Q$ for $N=64$  is depicted in Fig.~\ref{fig:64-reduction-Q}.  It was observed that the difference for $Q\geq 100$ was insignificant. Consequently, $Q=100$ has been used in the rest of the  simulations. As a side note, a very low value of $Q=5$ was included in the figure to show the unexpectedly acceptable reduction that it provides. \saeedDissertation{this same sign thing has to be visited again. why would it make a difference?}

\begin{figure}[tb]
	\begin{subfigure}{0.5\textwidth}
		\centering
		\includegraphics[width=\columnwidth]{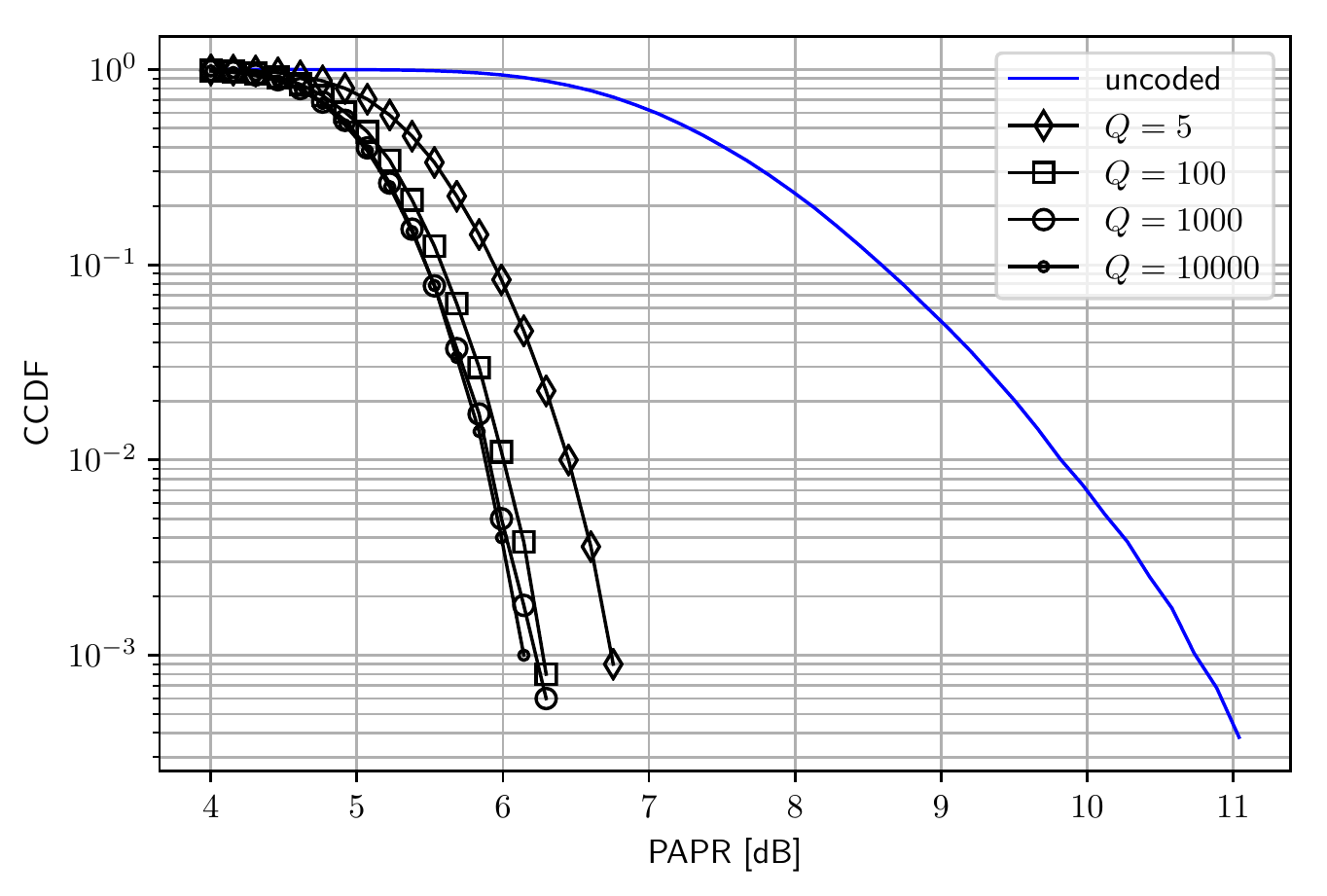}
		\caption{}
		\label{fig:64-reduction-Q}
	\end{subfigure}
	\begin{subfigure}{0.5\textwidth}
		\centering
		\includegraphics[width=\columnwidth]{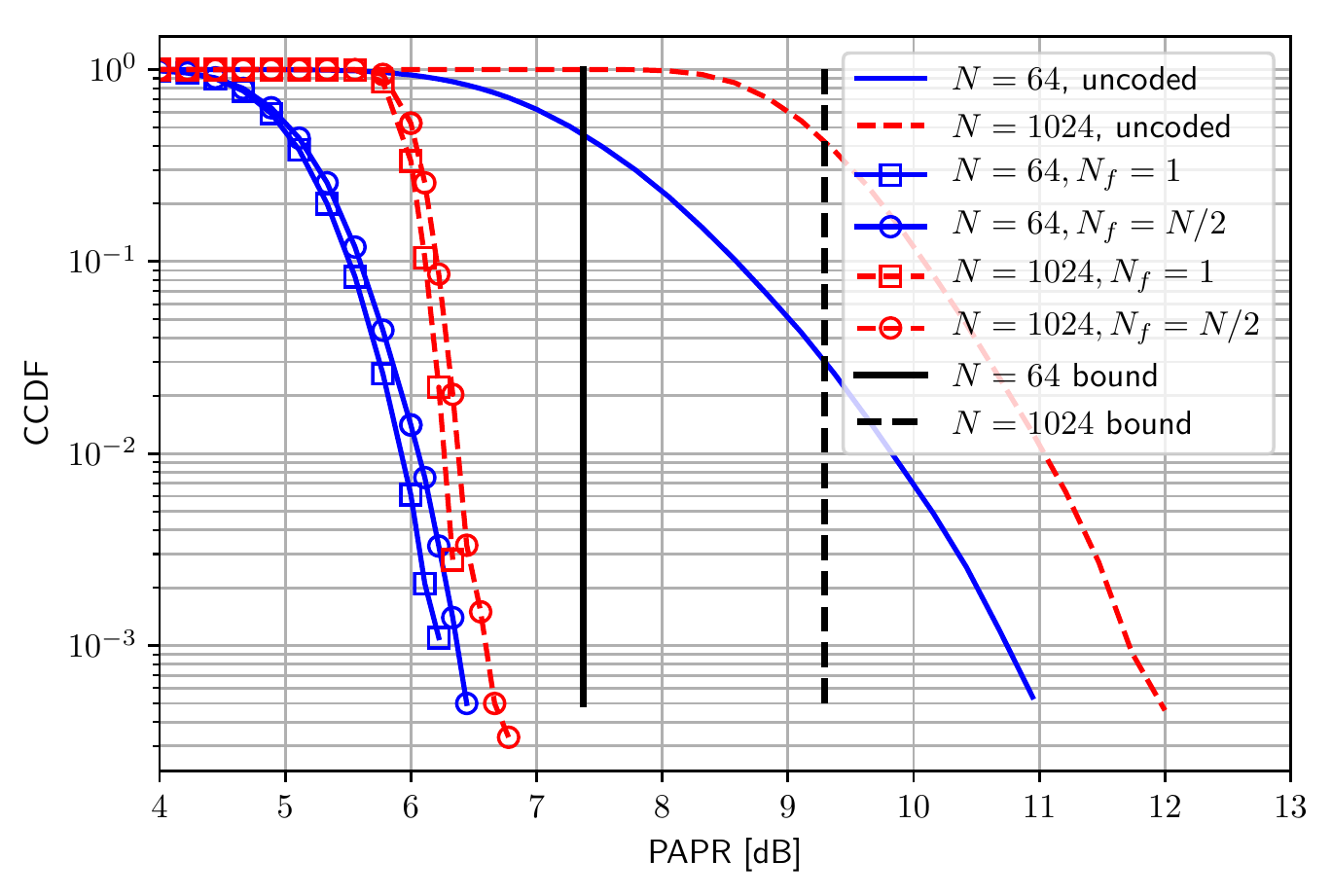}
		\caption{}
		\label{fig:paprCE}
	\end{subfigure}
	\caption{a) Reduction performance of the CE method for PAPR for various number of shots~$Q$, as defined in~\eqref{eq:estimator}, to show the reliability of  estimation for~$N=64$ and~16-QAM.\saeedDissertation{only $Q=5$ is done same sign.} b) Reduction performance of the CE method for PAPR with 16-QAM and~$Q=100$, including analytical upperbound of~\eqref{eqn:upperPAPR} on the worst-case reduced PAPR value.}
\end{figure}

 \saeedDissertation{ estimated CCDF curves for the initial expectation for several number of subcarriers are shown in Fig.~\ref{fig:initExpTheory}. It can be seen that the initial expectation becomes more densely concentrated about $\mu=\lim_{N\to\infty}\mathbb{E}[f(\mathbf{C}\odot\mathbf{X})]$ as $N$ increases. The other factor that affects the distribution is the modulation order.
For QPSK, the initial expectation stays roughly equal to $\mu$. For a higher constellation size, the distribution spreads out. However, even for  high number of points, such as 256-QAM, the distribution is still fairly concentrated. This evaluation confirms the asymptotic result obtained for the initial expectation in \eqref{eq:initExpTheory} for PAPR. with two other figures commented!}



\paragraph{Dependence on $N$} The PAPR reduction performance of the algorithm is shown in Fig.~\ref{fig:paprCE} for $N=64$ and 1024 subcarriers, including its pruned version which will be shortly introduced. The simulation results are depicted only for 16-QAM as similar results were observed for other constellations. \saeed{not all?}  A significant reduction gain of roughly 5.5~dB, equivalently an effective PAPR of 6.5~dB, was observed for $N=1024$.  A noticeable characteristic of the method, evident from the simulations, is that the change in the reduced effective PAPR is relatively small by increasing~$N$ from~$64$ to~1024.

The analytic upper bound on the worst-case reduced PAPR, as shown in Theorem~\ref{thm:PAPRupper} and given in~\eqref{eqn:upperPAPR}, is included in Fig.~\ref{fig:paprCE}. The proof of Theorem~\ref{thm:PAPRupper} relies on the extremal value theory to analyze the expected value of the uncoded PAPR, i.e. $\mathbb{E}[\theta_N(\mathbf{B})]$ as $N\to\infty$. The usefulness of this asymptotic result for $\mathbb{E}[\theta_N(\mathbf{B})]$ with finite $N$ can be asserted as~\eqref{eqn:upperPAPR} is almost equal to a diligently calculated empirical average of $\theta_N(\mathbf{B})$ for $N$ as small as 64. Refer to the discussion in Section~\ref{sec:analysis} regarding the relationship between the upperbound and the effective reduced PAPR. 

\saeedDissertation{
\begin{remark}  
\label{rmk:PAPRanalytic}
A prediction of the amount of reduction in the effective PAPR from the upperbound is not available. Nonetheless, the reduced effective PAPR must be much smaller that the upperbound unless the distribution of the reduced PAPR values is considerably concentrated. This matches the observation made from the curves in Fig.~\ref{fig:paprCE} that the distance of the upperbound from the effective PAPR increases as the accuracy of the asymptotic result, i.e. the value of $N$, increases. The upperbound, which is the expected value of the uncoded OFDM signal, is smaller than the uncoded effective PAPR. Therefore, the existence of such upperbound implies that the reduction gain increases by $N$. \saeed{see if you can turn it into a theorem or sth}
\end{remark}}

\paragraph{Pruned Sign Selection and Rate loss} It has been observed through simulations that the impact of a sign decision increases for the sign variables with higher indices. That is, the reduction steps in the trajectory of the conditional expectations, as the algorithm performs sign decisions for $x_0$ to $x_{N-1}$,  become statistically larger. This motivates pruning the sign bits whose contribution is insignificant. Formally, in the \emph{pruned Sign Selection}, $x^\ast_{0:N_f-1}$ are set as the signs of the first $N_f$ symbols which  fully carry data and the sign bits of $N-N_f$ last symbols are determined by~\eqref{eqn:CEGeneralRule}.

The pruned algorithm with $N_f=\frac{N}{2}$, as shown in Fig.~\ref{fig:paprCE}, causes negligible degradation in the reduction performance while reducing the rate loss of the Sign Selection approach.  Evident from \eqref{eqn:rateLoss}, the rate loss is inversely related to the constellation size $|\mathcal{M}|$. Accordindly, the rate loss is $\frac{1}{12}, \frac{1}{8}$ and $\frac{1}{4}$  for $N_f=\frac{N}{2}$ and 64-QAM, 16-QAM and QPSK respectively. Obviously, a lower rate loss implies a smaller number of sign selections, hence a lower computational complexity. \saeed{it's linearly related, you may wanna say that}


\saeedDissertation{ In comparison to randomized methods, the reduction gain behaves fundamentally different. Here we pick the famous method Selected Mapping (SLM). A short description... Recall that algorithms such as SLM are proposed mainly as practical methods with relatively low computational complexity. Therefore, the comparison made here is unfair, unless targeted only on showing the difference between randomized and derandomized paradigms. Fig... shows the performance for a low and high number of phase vectors. It can be seen that no matter how much resources or computation is spent, SLM has basic limitations.}

\saeedDissertation{ The analysis provides, equivalently, upperbound on reduced PAPR and minimum PAPR reduction using \eqref{eq:finalTheoryCDF} and~\eqref{eq:initExpTheory}. For instance, taking $0.001$ as the important probability level, we have the minimum reductions gathered in Table~\ref{tbl:minRed} for several values of $N$. As seen in Fig.~\ref{fig:initExpTheory}, the approximation is quite reliable even for $N$ as small as 64. }

\saeedDissertation{a table commented}


\paragraph{Indirect PAPR reduction by SE Metric} As discussed in Section~\ref{sec:LSEcalc},  the first $N-N_e$ signs decisions for reduction of the SE metric can be done by the rule in \eqref{eqn:LSErule} and the last $N_e$ are done by \eqref{eq:estimatorGeneral}, where the latter is based on the estimation of the conditional expectations. The choice of $N_e$ depends on $Q$, i.e. the number  of the realizations of the random sign vector used in the estimation. For a given $Q$, a number of the early decisions are done more accurately using closed-form expressions of \eqref{eqn:LSErule}. When the number of remaining signs is low enough, the accuracy of the estimation overcomes. This intuition was evaluated for SE by examining the reduction performance for $N=64$ \saeedDissertation{ , $\kappa=5$} and $Q=10,100, 10000$ for $N_e=0, 5, 10$ and $20$. The relatively small $N$ was chosen on purpose to have a smaller number of total random variables. It was observed that the effective PAPR reduces from roughly 8.5~dB for $N_e=0$ to 6.8~dB for $N_e=10$ which was better than both $5$ and $20$ with effective PAPR of roughly 7.1 and 7 dB. In addition, going from $Q=10$ to $Q=10000$ showed insignificant effect. As a conclusion, $N_e=10$ and $Q=100$ were selected.

\saeedDissertation{The adjustment parameter $\kappa$ introduced in Section~\ref{sec:LSEcalc} was motivated by the observation that having $\kappa$ improved the PAPR reduction for the practical range of $N$. For instance,  the reduced effective PAPR moves from 7.5~dB to 6.5~dB for $N=64$\saeedDissertation{,$k_e=N-10$} and $\kappa=10$. With the same value of $\kappa$, 8.2~dB improved to 6.8~dB for $N=1024$. Therefore, $\kappa=10$ has been used in the simulations.}

The indirect PAPR reduction achieved by reduction of the SE metric is shown in Fig.~\ref{fig:LSEperf} for $N=64$ and 1024. Although increasing the parameter $\kappa$ improves the SE metric in theory, numerical computations limit its value. Thus, $\kappa=10$ was chosen. It can be seen that the indirect PAPR reduction is as strong as the direct one showing a relatively small degradation. The pruning idea works as well, showing that only a slight loss in gain occurs when rate loss is halved.  \saeedLater{would it make a difference it we use same signs for those last signs as in PAPR?}

\begin{figure}[tb]
	\begin{subfigure}{0.5\textwidth}
		\centering
		\includegraphics[width=\columnwidth]{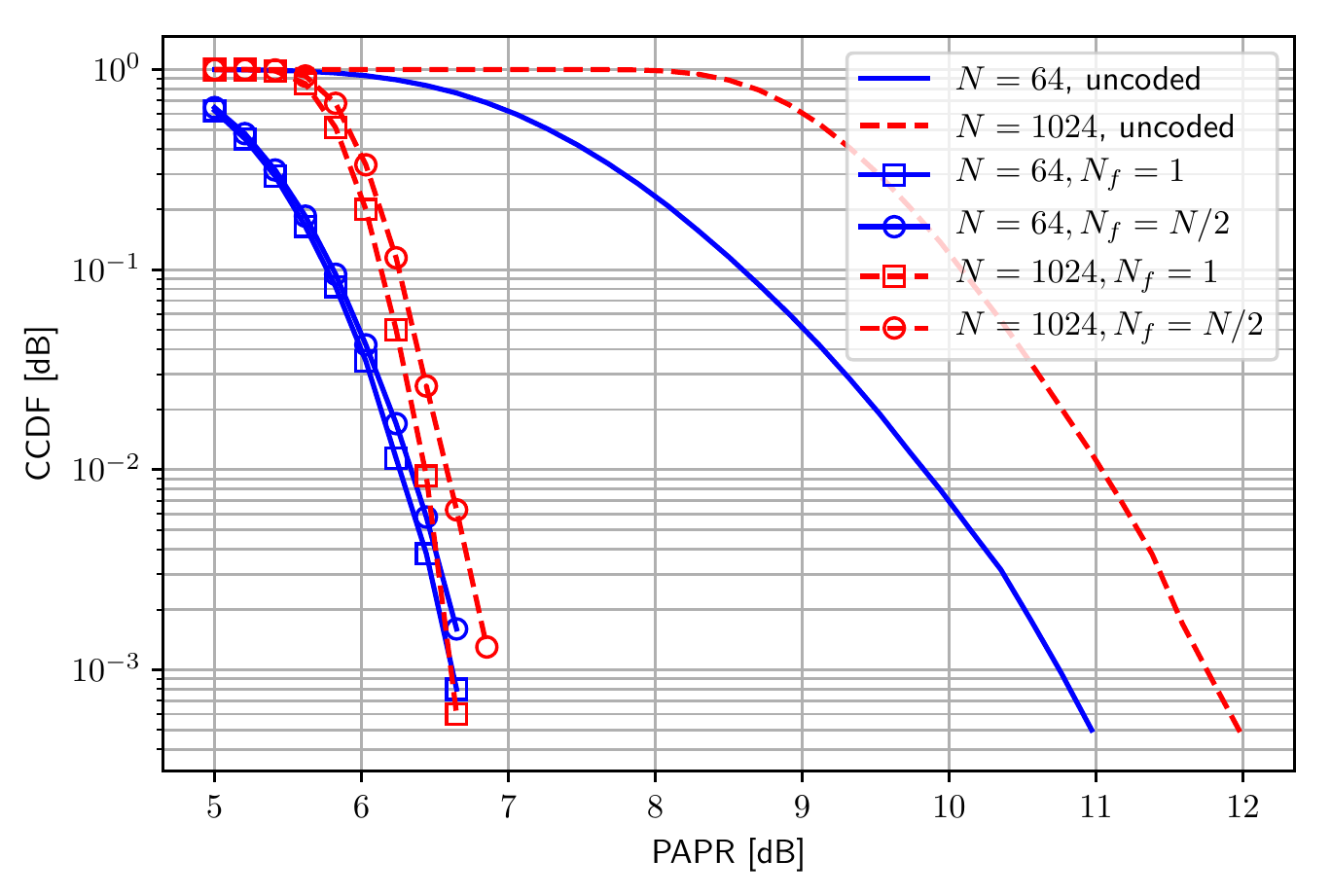}
		\caption{}
		\label{fig:LSEperf}
	\end{subfigure}
	\begin{subfigure}{0.5\textwidth}
		\centering
		\includegraphics[width=\columnwidth]{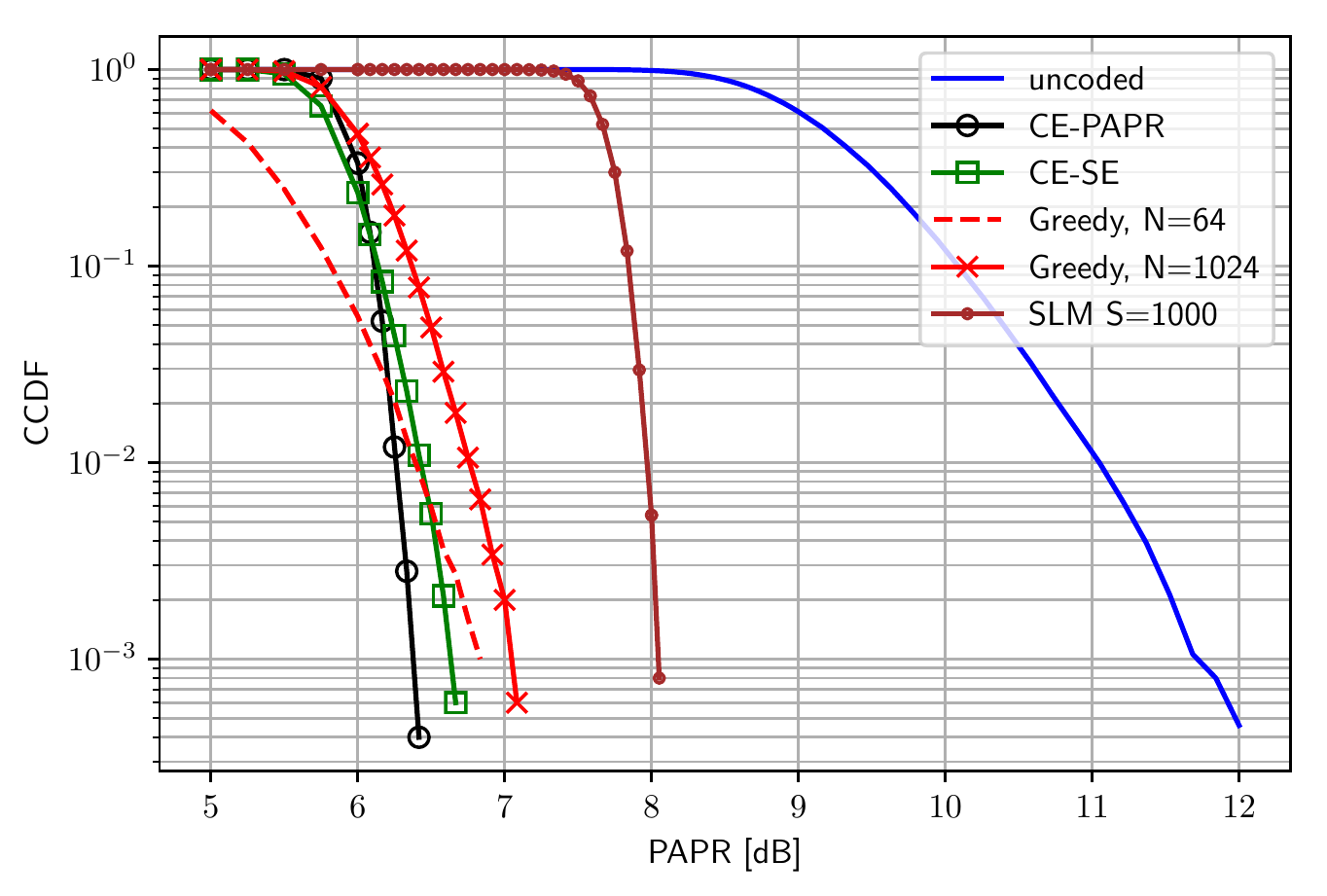}
		\caption{}
		\label{fig:perfComp}
	\end{subfigure}
	\caption{a) Indirect PAPR reduction by pruned application of the CE method to SE for 16-QAM with $\kappa=10, N_e=10 ,Q=100$. b) Reduction performance of the CE method applied to PAPR (CE-PAPR) and SE (CE-SE) compared to Greedy Algorithm~\cite{Sharif2009sign} and Selected Mapping (SLM). $N=1024$ unless stated otherwise.}
\end{figure}


\paragraph{Comparison} It is a rather common characteristic of the PAPR reduction methods in the literature that the reduced PAPR grows larger as $N$ increases. The CE method differs in this regards such that, as mentioned before, the reduced effective PAPR increases only slightly by $N$. 
Among the Sign Selection methods, a competitive proposal referred to as the Greedy Algorithm  \cite{Sharif2009sign} was chosen for comparison. The well-known Selected Mapping (SLM) \cite{543811} with sign flips as phase rotations was also included, which can as well be seen as a Sign Selection method.  The results are gathered in Fig.~\ref{fig:perfComp}, where it can be seen that the Greedy Algorithm performs better for $N=64$ but falls behind for $N=1024$. The performance of SLM depends on the number of independent mappings of the signal denoted by $S$. For the considerably large $S=1000$, the reduction gained by SLM is far lower. As a matter of fact, the performance of SLM can be shown to improve only slightly by increasing $S$ indicating its inherent limitation. The gap becomes larger for higher $N$.


\begin{figure}[tb]
	\begin{subfigure}{0.5\textwidth}
		\centering
		\includegraphics[width=\columnwidth]{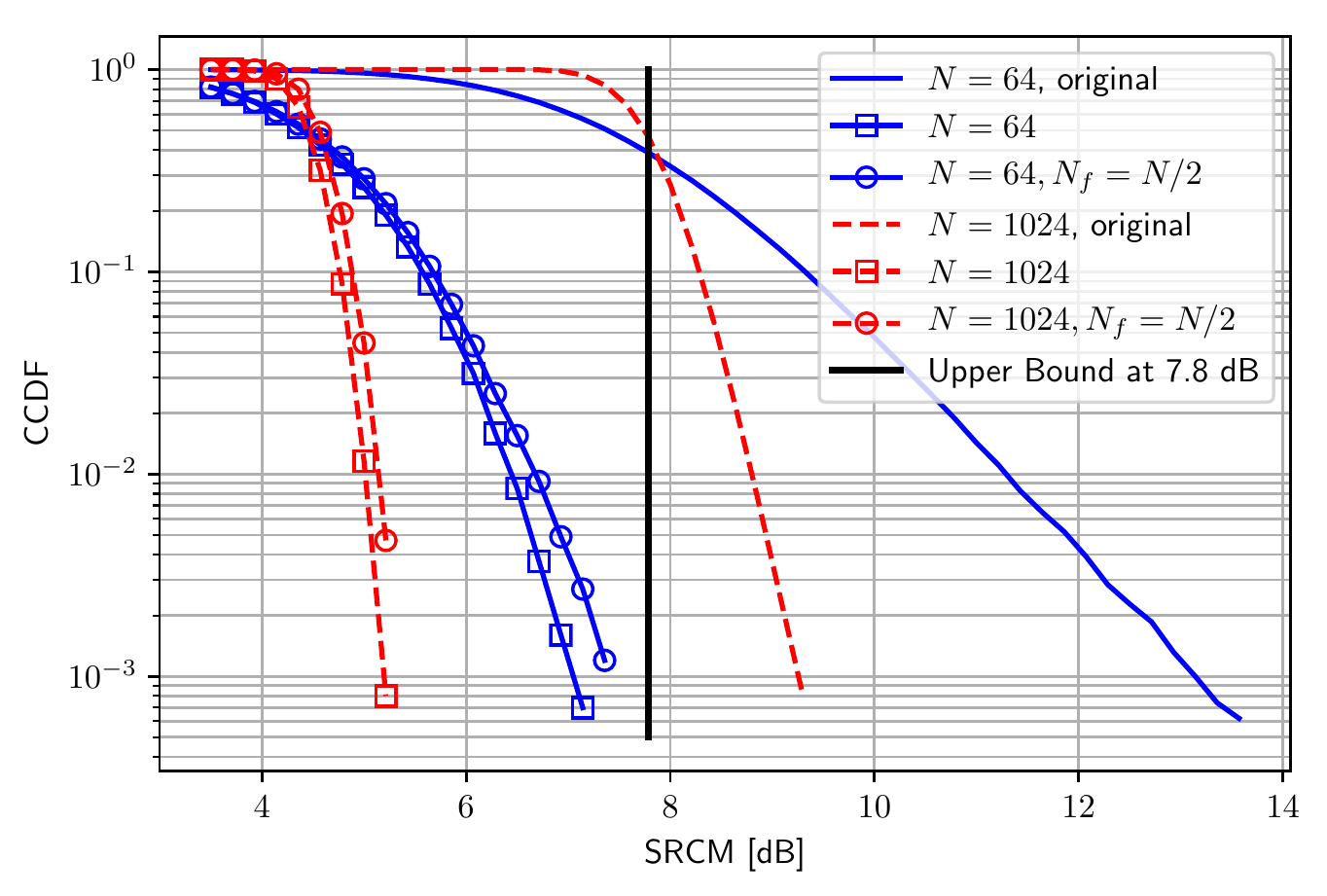}
		\caption{}
		\label{fig:CMreduction}
	\end{subfigure}
	\begin{subfigure}{0.5\textwidth}
		\centering
		\includegraphics[width=\columnwidth]{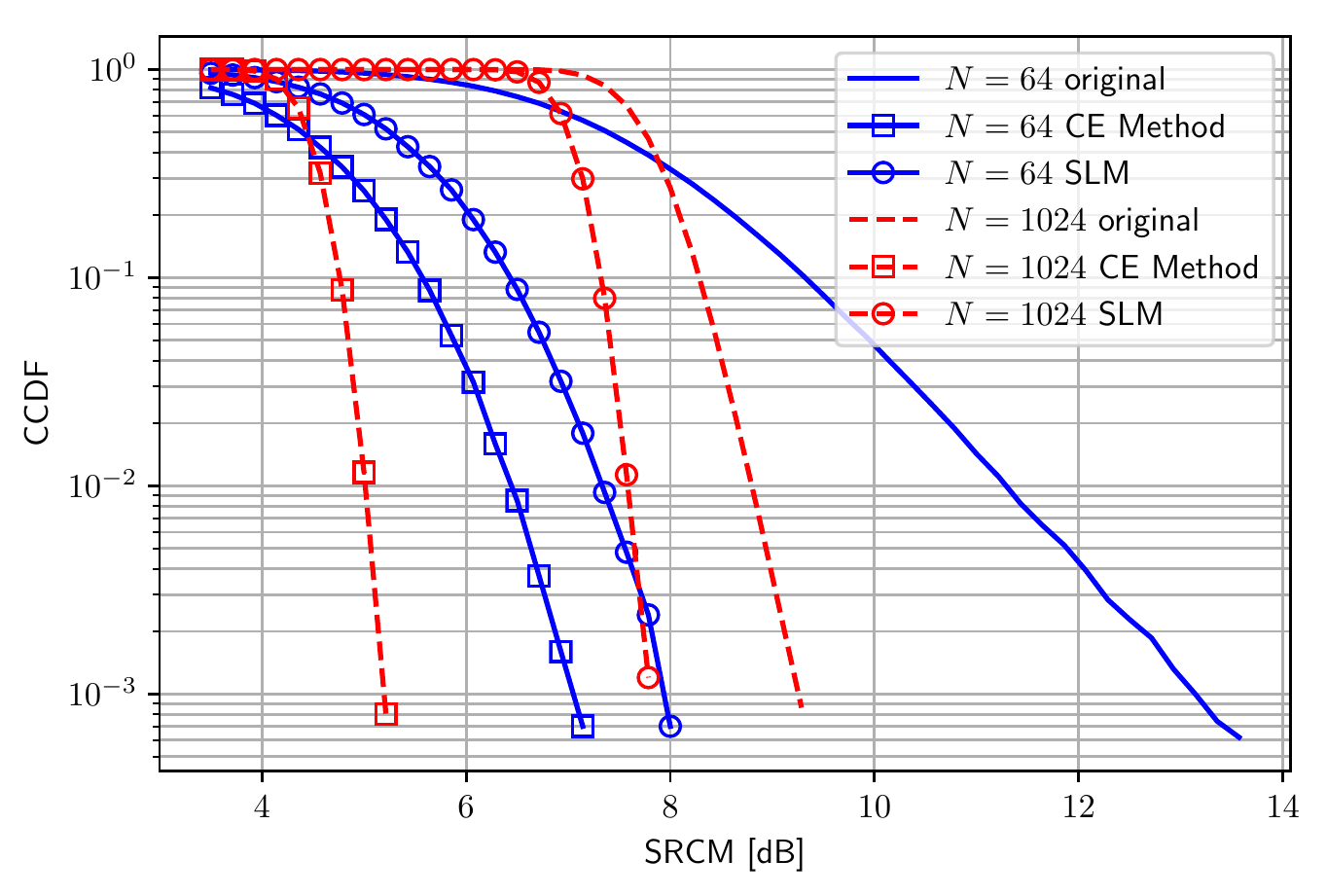}
		\caption{}
		\label{fig:CMreduction2}
	\end{subfigure}
	\caption{a) Reduction performance by pruned CE method applied to SRCM for 16-QAM, including the analytical upperbound on the worst-case reduced metric value. b) Comparison of the SRCM reduction performance of CE Method and SLM with $S=100$.}
\end{figure}


\subsection{Cubic Metric}


Reduction performance for SRCM is shown in Fig.~\ref{fig:CMreduction} for $N=64$ and 1024 to cover a wide range of subcarrier numbers.  As shown in the figure, the performance of the pruned algorithm with $N_f=\frac{N}{2}$, i.e. using the second half of sign bits, is only slightly degraded compared to the $N_f=0$ case. This reconfirms the result seen before in PAPR reduction that  the proposed algorithm provides almost the same reduction by half the full rate loss, i.e. $\frac{1}{2}\log_{|M|} 2$. 

The analytical upperbound of Theorem~\ref{thm:SRCMupperbound} is as well included in Fig.~\ref{fig:CMreduction}. The upperbound on the worst-case reduced SRCM is the expected value of the uncoded SRCM, i.e. $\mathbb{E}[\eta_N(\mathbf{B})]$ as $N\to\infty$. The reliability of this asymptotic result was observed as it matches very closely with the empirical average of $\eta_N(\mathbf{B})$ particularly when $N$ is larger than 64. Similar to the PPAR reduction, the simulation results show a growing reduction gain as $N$ increases. The difference in the SRCM case is that the upperbound is independent of $N$, therefore the reduced effective SRCM decreases, which implies the growing reduction gain.


Recall that the main metric of interest is CM which is calculated from RCM by knowing hardware-related constants. Therefore, we suffice to reporting RCM, which is the expected value of SRCM as  in~\eqref{eqn:RCM-SRCM}.  RCM is reduced roughly from 7.7~dB to 4.5~dB for both $N=64$ and $1024$. That is, a surprising result of nearly 3.2~dB reduction practically regardless of $N$. For $N=512$, which is the case studied in \cite{CMmotorola} with available $K_\mathrm{slp}$ and $K_\mathrm{bw}$, the CM is reduced to 2.87~dB. The available values are presented in Table~\ref{tbl:CMperf}. 


Due to the scarcity of research on CM reduction, we sufficed to the well-known SLM method~\cite{543811} for comparison. 
The result is shown in Fig.~\ref{fig:CMreduction2} for the relatively large $S=100$. For both cases of $N=64$ and $1024$, performance of the proposed algorithm is significantly better than~SLM. 


\begin{table}[bt]
	\centering
	\caption{RCM Reduction Performance.}
	\begin{tabular}{|l|c|c|c|c|c|}
		\hline 
		$N$ & original RCM & original CM & reduced RCM  & reduced CM \\
		\hline
		64 & 7.7 dB& -& 4.5 dB& - \\
		512 & 7.8 dB& 4.8 dB&4.5 dB & 2.87 dB\\
		1024 & 7.8 dB& - & 4.5 dB& -\\
		\hline
	\end{tabular}
	\label{tbl:CMperf}
\end{table}

\section{Conclusion}

%
%

The Method of Conditional Expectations was proposed to find a suboptimal solution to the Sign Selection problem. This investigation led to three particular observations. Firstly, using the conditional expectations as the core element of the sign selection rules provides room for reducing complexity of the algorithm. In particular, proposal of the SE metric as a surrogate function to PAPR led to closed-form expressions for sign selection rule and negligible loss in performance. A similar observation was done for CM which inherently has  a tractable definition. This motivates creativity in developing surrogate functions to replace the metrics with physical significance, i.e. PAPR and CM. Secondly, the structure of the CE Method permits derivation of a meaningful upperbound on the largest reduced metric value, such that it actually guarantees a minimum reduction on the effective metric value. Thirdly, the actual performance observed by simulations show a remarkable reduction which is persistent as $N$ increases. In addition, the reduction gain deteriorates only slightly when reducing the number of used sign bits to half, which implies a significantly lower rate loss.

\appendices

\section{Proof of Theorem~\ref{thm:mcdiarmid}}
\label{app:mcdiarmidProof}

Recall the random vectors $\mathbf{X}^l\in \{-1,1\}^{N-j-1}, l=1,\ldots,Q$ with independent elements as used in the definition of $\hat{g}_j^\pm(\mathbf{b},\mathbf{X}^{1:Q})$ in~\eqref{eq:estimator}. Suppose that the real-valued function $\hat{g}_j^+$ satisfies
\begin{equation}
	\left|\hat{g}_j^+(\mathbf{b},\mathbf{v}^{1:Q})-\hat{g}_j^+(\mathbf{b},\mathbf{z}^{1:Q})\right|\leq d_{m,k}
	\label{eq:McDiarmid-diff-1}
\end{equation}
when vectors $\mathbf{v}^l, \mathbf{z}^l \in\{-1,1\}^{N-j-1}, l=1, 2, \ldots, Q$  disagree only at $v^m_k=-z^m_k$. Then for any $\epsilon \geq 0$, McDiarmid's \emph{independent bounded differences} inequality \cite[p. 206]{mcdiarmid1998} holds as
\begin{equation*}
	\mathbb{P}\left(\left|\hat{g}_j^+(\mathbf{b},\mathbf{X}^{1:Q})- g_j^+(\mathbf{b})\right|\geq \epsilon\right) \leq 2 e^{-2\epsilon^2(\sum_{m,k} d_{m,k}^2)^{-1}},
\end{equation*}
where $g_j^+(\mathbf{b})=\mathbb{E}[\hat{g}_j^+(\mathbf{b},\mathbf{X}^{1:Q})]$.

The bounded differences of \eqref{eq:McDiarmid-diff-1} on $\hat{g}^+_j$ can be shown as follows. 
\begin{align}
	\left|\hat{g}^+_j(\mathbf{b},\mathbf{v}^{1:Q})-\hat{g}^+_j(\mathbf{b},\mathbf{z}^{1:Q})\right| &=\left|\frac{1}{Q} \sum_{l=1}^Q \left[\theta_N\left(\mathbf{b}\odot\psi_{j}^+(\mathbf{v}^l)\right)-\theta_N\left(\mathbf{b}\odot\psi_{j}^+(\mathbf{z}^l)\right) \right] \right| \nonumber \\
	&\leq  \frac{1}{Q} \sum_{l=1}^{Q} \left| \max_n \left|s\left(n,\mathbf{b}\odot\psi_{j}^+(\mathbf{v}^l)\right)\right| - \max_n \left|s\left(n,\mathbf{b}\odot\psi_{j}^+(\mathbf{z}^l)\right)\right|  \right| \nonumber \\
	&\leq \frac{1}{Q} \sum_{l=1}^{Q}  \max_n \left| s\left(n,\mathbf{b}\odot\psi_{j}^+(\mathbf{v}^l)\right) - s\left(n,\mathbf{b}\odot\psi_{j}^+(\mathbf{z}^l)\right)  \right| \nonumber \\
	& = \frac{1}{Q} \frac{1}{ \sqrt{N}\sigma_b}2|b_m|  \leq \frac{1}{Q} \frac{1}{ \sqrt{N}} d
\end{align}
where 
\begin{equation*}
	\left|\max_n |p(n)| - \max_n |q(n)| \right|\leq\max_n |p(n)-q(n)|
\end{equation*}
is used and $d= 2\sigma_b^{-1}\max_{x\in\mathcal{M}} |x|$. Therefore, $\sum_{m,k} d_{m,k}^2 = Q\frac{N-j-1}{N}d$, which completes the proof for $\hat{g}^+_j$. Similar steps can be taken to proof the result for $\hat{g}^-_j$.


\section{Proof of Lemma \ref{lem:variances}}
\label{app:variances}

\saeedDissertation{I didn't take this path: Recall from the Sign Selection problem statement that $\mathbb{E}[\mathbf{B}]\neq  0$ for $\mathbf{B}\in \mathcal{M}^N$ with any choice of $\mathcal{M}'$. In this proof, however, it is necessary to have zero expected value for the data symbols. Consider $\mathbf{G}=\mathbf{B}\odot \mathbf{V}$ for a random $\mathbf{V}\in\{-1,1\}^N$. Clearly, $\mathbb{E}[\mathbf{G}]=0$. Replacing $C$ with $G$ in the derivations throughout the paper does not alter the decision rules and the solution $\mathbf{x}^\ast$ obtained for a realization of $\mathbf{B}$ will remain the same.\saeed{yes? and why if so? and doesn't make the whole thing just trivial?}}

\saeed{from $j+1$?}

\subsection{ $R_{ri}(\mathbf{B},\tau)$ a Gaussian random variable}  We begin the proof by analyzing  $R^j_{ri}(\tau,\mathbf{B})$ at iteration $j$ of the CE Method and for the random vector of data symbols $\mathbf{B}\in\mathcal{M}^N$ which was defined in Lemma~\ref{lem:variances} and is rewritten here  as
\begin{equation}
	R^j_{ri}(\tau,\mathbf{B})= \lim_{N\to\infty} h^{ri}_N(\mathbf{B},t_1,t_2),
	\label{eq:riCorrDef}
\end{equation}
where $\tau=t_2-t_1$ and
\begin{equation}
	h^{ri}_N(\mathbf{B},t_1,t_2)= \mathbb{E}_{\mathbf{Y}^\pm_j} [\hat{u}_r(t_1, \mathbf{B}\!\odot\! \mathbf{Y}^\pm_j)\hat{u}_i(t_2, \mathbf{B}\!\odot\! \mathbf{Y}^\pm_j)].
	\label{eq:hriNdef}
\end{equation}
Based on the definition of the signal $u$ in \eqref{eq:ofdmContSymbolDef}, 
\begin{align}
	h^{ri}_N(\mathbf{B},t_1,t_2)&= \frac{1}{N\sigma_b^2}   \mathbb{E}_{X_{j+1:N-1}} \left[ \sum_{k_1=j+1}^{N-1} \sum_{k_2=j+1}^{N-1}   X_{k_1}\mathrm{Re} \{B_{k_1} e^{i\omega_{k_1} t_1}\}     X_{k_2}\mathrm{Im} \{B_{k_2} e^{i\omega_{k_2} t_2}\} \right] \nonumber \\
	&=\!\frac{1}{N\sigma_b^2} \sum_{k=j+1}^{N-1} \!\mathrm{Re} \{B_k e^{i\omega_k t_1}\} \mathrm{Im} \{B_k e^{i\omega_k t_2}\},
	\label{eqn:rawDefHri}
\end{align}
where $\omega_k=\frac{2\pi}{N}F_s k$ and the independence of the random sign variables $X_{j+1:N-1}$ in $\mathbf{Y}^\pm_j$ is used. 

At this juncture, the relation of $j$ and $N$ must be reviewed. Consider two cases: If $j$ remains constant while $N$ grows, it can be easily seen from the following derivations that the desired quantities are identical in the limit, i.e. as $N\to\infty$, to the case where no sign decision is made by the CE Method. The second case is when $j$ grows with $N$, which needs attention and is the assumption in Lemma~\ref{lem:variances}. Specifically, as introduced in Lemma~\ref{lem:variances}, $j=\rho N$ where $0\leq \rho\leq 1$ is a  constant rational number.

Since the summands in~\eqref{eqn:rawDefHri} are independent, it is straightforward to apply the Central Limit Theorem to show that $h^{ri}_N(\mathbf{B},t_1,t_2)$ converges in distribution to a Gaussian random variable as $N\to\infty$. That is, 
\begin{equation}
	R^j_{ri}(\tau,\mathbf{B})  \sim \mathcal{N}\left(\mu_{ri}(\tau),\sigma^2_{ri}\right),
\end{equation}
where
\begin{equation}
	\mu_{ri}(\tau)= \lim_{N\to\infty} \mathbb{E}\left[h^{ri}_N(\mathbf{B},t_1,t_2)\right] 
	\label{eq:muDef}
\end{equation}
and
\begin{equation}
	\sigma^2_{ri}= \lim_{N\to\infty} \mathbb{E}\left[\left(h^{ri}_N(\mathbf{B},t_1,t_2)\right)^2\right] - \left(\mu_{ri}(\tau)\right)^2.
\end{equation}
Next we derive $\mu_{ri}(\tau)$ and show that $\sigma_{ri}^2=0$, which implies that $R^j_{ri}(\tau,\mathbf{B})$ is equal to $\mu_{ri}(\tau)$  with probability one. 

\subsection{Convergence of $\mu_{ri}(\tau)$} 

Given the independence of the data symbols, we have
\begin{align*}
	\mathbb{E}[h^{ri}_N(\mathbf{B},t_1,t_2)] & = \frac{1}{2N} \sum_{k=j+1}^{N-1} a_k,
\end{align*}
where 
\begin{align*}
	a_k= \cos(\omega_k t_1) \sin(\omega_k t_2) + \sin(\omega_k t_1) \cos(\omega_k t_2).
\end{align*}
Consequently,
\begin{align}
	\mathbb{E}[  h^{ri}_N(\mathbf{B},  t_1,t_2)]  &= \frac{1}{2N} \sum_{k=j+1}^{N-1} \sin(\omega_k \tau) \nonumber \\
	&=\frac{1}{2N} \sum_{k=0}^{N-1}  \sin(\omega_k \tau)  - \frac{1}{2} \frac{j}{N} \frac{1}{j} \sum_{k=0}^{j} \sin(\omega_k \tau).
	\label{eqn:twoSeriesForJ}
\end{align}
where $\tau=t_2-t_1$. Consider the series
\begin{equation}
	 \alpha_M= \frac{1}{2M} \sum_{k=0}^{M-1} \sin\left(\frac{2\pi}{M} k F_s \tau\right) 
\end{equation}
which can be shown to converge as 
\begin{equation*}
	\lim_{M\to\infty} \alpha_M = \frac{1}{4\pi F_s \tau} \left(1-\cos(2\pi F_s \tau)\right), \quad \tau \neq 0.
\end{equation*}
Recall that $j=\rho N$, where $\rho=\frac{m}{p}$ is an irreducible fraction, dictates that $N$ grows as $N=m,2m,\ldots$ with $m\in\mathbb{N}$. Consequently, the first series in \eqref{eqn:twoSeriesForJ}, i.e.
\begin{equation*}
	\beta_N= \frac{1}{2N} \sum_{k=0}^{N-1} \sin(\frac{2\pi}{N} k F_s \tau), \quad N=m,2m,\ldots
\end{equation*}
is a subsequence of $\{\alpha_M\}$, which readily shows that \cite{Rudin1987}
\begin{equation*}
	\lim_{N\to\infty} \beta_N = \lim_{M\to\infty} \alpha_M.
\end{equation*}
Rewriting $\frac{2\pi}{N}$ as $\frac{2\pi}{j} \frac{j}{N}=\frac{2\pi}{j} \rho$, the second series in \eqref{eqn:twoSeriesForJ} can be written as
\begin{equation*}
	\zeta_j=\frac{1}{j} \sum_{k=0}^{j} \sin(\frac{2\pi}{j} \rho F_s k \tau),\quad j=p,2p,\ldots.
\end{equation*}
Then $\zeta_j$ is a subsequence of a sequence similar to $\{\alpha_M\}$ and consequently
\begin{equation*}
	\lim_{j\to\infty} \zeta_j = \frac{1}{4\pi F_s \rho \tau} (1-\cos(2\pi  F_s \rho \tau)), \quad \tau\neq 0
\end{equation*}
Therefore, \eqref{eqn:twoSeriesForJ} converges. Substituting the limit in \eqref{eq:muDef}, we have 
\begin{align}
	&\mu_{ri}(\tau) = 
	\begin{dcases}
		  \frac{1}{4\pi F_s \tau} \left(\frac{1}{\rho}\cos(2\pi F_s \, \rho\,\tau)\!-\! \cos(2\pi F_s \tau)\right) & \tau\neq 0 \\
		0 & \tau=0
	\end{dcases}
\end{align}
where the case of $\tau=0$ is trivial. 

\subsection{Convergence of $\sigma_{ri}^2$ to zero}
Consider that
\begin{align*}
	\left(h^{ri}_N(\mathbf{B},t_1,t_2)\right)^2 &=\frac{1}{N^2 \sigma_b^4} &\Bigg( \sum_{k=j}^{N-1} \Big[B_k^r \cos(\omega_k  t_1) -B_k^i \sin(\omega_k t_1)\Big]   \Big[B_k^r \sin(\omega_k  t_2) -B_k^i \cos(\omega_k t_2)\Big] \Bigg)^2 
\end{align*}
where $B_k^r=\mathrm{Re}[B_k]$ and $B_k^i=\mathrm{Im}[B_k]$. By some manipulations which are omitted for the sake brevity, we have
\begin{align}
	\mathbb{E} \left[\left(h^{ri}_N(\mathbf{B},t_1,t_2)\right)^2\right] =&\ \frac{1}{N^2 \sigma_b^4} \Bigg(\frac{\sigma_b^4}{4}\Big(\sum_{k=j}^{N-1} a_k \Big)^2 -  \frac{\sigma_b^4}{4}\sum_{k=j}^{N-1} a_k^2  \nonumber \\
	& + \gamma \sum_{k=j}^{N-1} \left[ \left(\cos(\omega_k t_1) \sin(\omega_k t_2)\right)^2 + \left(\cos(\omega_k t_2) \sin(\omega_k t_1)\right)^2\right] \nonumber \\ 
	& - \frac{\sigma_b^4}{2} \sum_{k=j}^{N-1} \cos(\omega_k t_1) \cos(\omega_k t_2) \sin(\omega_k t_1) \sin(\omega_k t_2) \Bigg),
	\label{eq:EhriN2}
\end{align}
and $\gamma=\mathbb{E}[(B^r_k)^4]$. Notice that all summands in the four summations of \eqref{eq:EhriN2} are bounded. For instance, $\frac{1}{N^2} \sum_{k=j}^{N-1} a_k^2 \leq \frac{1}{N} A$ 
with $a_k^2\leq A$ for some  $A \geq 0$. Consequently, the non-negative series $\frac{1}{N^2} \sum_{k=j}^{N-1} a_k^2$ converges to zero. By the same argument, the third and fourth summations in \eqref{eq:EhriN2} vanish in the limit too. That is,
\begin{align}
	\lim_{N\to\infty} \mathbb{E}\left[\left(h^{ri}_N(\mathbf{B},t_1,t_2)\right)^2\right] =\lim_{N\to\infty}  \Bigg(\frac{1}{2 N}\sum_{k=j}^{N-1} a_k \Bigg)^2.
\end{align}
It was already shown in derivation of $\mu_{ri}(\tau)$ that $\frac{1}{N}\sum_{k=j}^{N-1}a_k$ converges. Therefore, \cite{Rudin1987}
\begin{align*}
	\lim_{N\to\infty}  \Bigg(\frac{1}{2 N}\sum_{k=j}^{N-1} a_k \Bigg)^2 =    \Bigg( \lim_{N\to\infty} \frac{1}{2 N}\sum_{k=j}^{N-1} a_k \Bigg)^2 	=\Big(\mu_{ri}(\tau)\Big)^2.
\end{align*}
Finally, 
\begin{align}
	\sigma_{ri}^2&=\lim_{N\to\infty}  \Bigg(\frac{1}{2 N}\sum_{k=j}^{N-1} a_k \Bigg)^2 - \Big(\mu_{ri}(\tau)\Big)^2=0.
	\label{eq:sigmaRIzero}
\end{align}

Consequently, we have shown that $R_{ri}(\mathbf{B},\tau)$ is an \emph{almost surely constant} random variable and $R_{ri}(\mathbf{B},\tau)=\mu_{ri}(\tau)$ with probability one. This completes the proof for $R_{ri}(\mathbf{B}, \tau)$.

\subsection{$R_{rr}(\mathbf{B}, \tau)$ and $R_{ii}(\mathbf{B}, \tau)$}

Similarly, for $R_{rr}(\mathbf{B}, \tau)$ we have
\begin{align}
	h^{rr}_N(\mathbf{B},t_1,t_2)&= \mathbb{E} \left[\hat{u}_r(t_1, \mathbf{B}\!\odot\! \mathbf{Y}^\pm_j) \ \hat{u}_r(t_2, \mathbf{B}\!\odot\! \mathbf{Y}^\pm_j)\right] =\frac{1}{N\sigma_b^2} \sum_{k=j}^{N-1} \mathrm{Re} \{b_k e^{i\omega_k t_1}\} \mathrm{Re} \{b_k e^{i\omega_k t_2}\} \nonumber
\end{align}
with
\begin{align}
	\mu_{rr}(\tau)&= \lim_{N\to\infty} \mathbb{E}\left[h^{rr}_N(\mathbf{B},t_1,t_2)\right]  \nonumber \\
	&= \frac{1}{2}\lim_{N\to\infty} \frac{1}{N} \sum_{k=j}^{N-1} \cos(\omega_k \tau) \nonumber \\
	&= \frac{1}{2}\left(\mathrm{sinc} (2F_s\tau)- \rho \mathrm{sinc} (2F_s\tau \rho)\right).
\end{align}
Following the steps taken to derive \eqref{eq:sigmaRIzero}, we have 
\begin{equation*}
	\sigma^2_{rr}= \lim_{N\to\infty} \mathbb{E}\left[\left(h^{rr}_N(\mathbf{B},t_1,t_2)\right)^2\right] - \left(\mu_{rr}(\tau)\right)^2=0,
\end{equation*}
which implies that $R_{rr}(\mathbf{B},\tau)=\mu_{rr}(\tau)$ with probability one and completes the proof. Finally, the derivations for $R_{ii}(\mathbf{B}, \tau)$ are identical to that of $R_{rr}(\mathbf{B}, \tau)$.

\saeedDissertation{ For $R_{ii}(\tau)$ we have
\begin{align*}
	h^{ii}_N(\mathbf{b},t_1,t_2)&= \mathbb{E} [u_i(t_1)u_i(t_2)] \\
	&=\frac{1}{N} \sum_{k=0}^{N-1} \mathrm{Im} \{c_k e^{i\frac{2\pi}{N}F_s k t_1}\} \mathrm{Im} \{c_k e^{i\frac{2\pi}{N}F_s k t_2}\} \nonumber
\end{align*}
with
\begin{align*}
	\mathbb{E}[h_{ii}(\mathbf{B},t_1,t_2)]&= \lim_{N\to\infty} \mathbb{E}[h^{ii}_N(\mathbf{B},t_1,t_2)] \\
	&= \frac{\sigma_b^2}{2}\lim_{N\to\infty} \frac{1}{N} \sum_{k=j}^{N-1} \cos(\frac{2\pi}{N} F_s k \tau) 
\end{align*}
which is the same as derivations for $R_{rr}(\tau)$ .}

\saeedLater{Question: Would it not be simpler to go with $\mathbb{E}_\mathbf{C\odot\mathbf{X}}[]= \mathbb{E}_\mathbf{B} \mathbb{E}_\mathbf{X}[]$ from scratch?}

\saeedLater{Triangular array}

\bibliographystyle{IEEEtran}
\bibliography{biblio}

\begin{thebibliography}{10}
\providecommand{\url}[1]{#1}
\csname url@samestyle\endcsname
\providecommand{\newblock}{\relax}
\providecommand{\bibinfo}[2]{#2}
\providecommand{\BIBentrySTDinterwordspacing}{\spaceskip=0pt\relax}
\providecommand{\BIBentryALTinterwordstretchfactor}{4}
\providecommand{\BIBentryALTinterwordspacing}{\spaceskip=\fontdimen2\font plus
\BIBentryALTinterwordstretchfactor\fontdimen3\font minus
  \fontdimen4\font\relax}
\providecommand{\BIBforeignlanguage}[2]{{%
\expandafter\ifx\csname l@#1\endcsname\relax
\typeout{** WARNING: IEEEtran.bst: No hyphenation pattern has been}%
\typeout{** loaded for the language `#1'. Using the pattern for}%
\typeout{** the default language instead.}%
\else
\language=\csname l@#1\endcsname
\fi
#2}}
\providecommand{\BIBdecl}{\relax}
\BIBdecl

\bibitem{Pun2007}
M.-o. Pun, M.~Morelli, and C.~C.~J. Kuo, \emph{Multi-Carrier Techniques For
  Broadband Wireless Communications: A Signal Processing Perspectives}.\hskip
  1em plus 0.5em minus 0.4em\relax London, UK, UK: Imperial College Press,
  2007.

\bibitem{Ekstrom2006}
H.~Ekstrom, A.~Furuskar, J.~Karlsson, M.~Meyer, S.~Parkvall, J.~Torsner, and
  M.~Wahlqvist, ``{Technical solutions for the 3G long-term evolution},''
  \emph{IEEE Communications Magazine}, vol.~44, no.~3, pp. 38--45, March 2006.

\bibitem{CMmotorola2004}
Motorola, ``{Comparison of PAR and Cubic Metric for Power De-rating},'' 3GPP
  TSG-RAN WG1 LTE, Tech. Rep., May 2004, tdoc R1-040642.

\bibitem{armstrong2002}
J.~Armstrong, ``{Peak-to-average power reduction for OFDM by repeated clipping
  and frequency domain filtering},'' \emph{Electronics Letters}, vol.~38,
  no.~5, pp. 246--247, Feb 2002.

\bibitem{543811}
R.~Bauml, R.~F.~H. Fischer, and J.~Huber, ``Reducing the peak-to-average power
  ratio of multicarrier modulation by selected mapping,'' \emph{Electronics
  Letters}, vol.~32, no.~22, pp. 2056--2057, Oct 1996.

\bibitem{Tellado:2000}
J.~Tellado, \emph{Multicarrier Modulation with Low {PAR}: Applications to {DSL}
  and Wireless}.\hskip 1em plus 0.5em minus 0.4em\relax Kluwer Academic
  Publishers, 2000.

\bibitem{1421929}
S.~H. Han and J.~H. Lee, ``An overview of peak-to-average power ratio reduction
  techniques for multicarrier transmission,'' \emph{Wireless Communications,
  IEEE}, vol.~12, no.~2, pp. 56--65, April 2005.

\bibitem{gerhard2013SPM}
G.~Wunder, R.~F.~H. Fischer, H.~Boche, S.~Litsyn, and J.-S. No, ``The {PAPR}
  problem in {OFDM} transmission: New directions for a long-lasting problem,''
  \emph{The IEEE signal processing magazine}, vol. abs/1212.2865, 2013.

\bibitem{Behravan2011}
M.~Deumal, A.~Behravan, and J.~L. Pijoan, ``{On Cubic Metric Reduction in OFDM
  Systems by Tone Reservation},'' \emph{IEEE Transactions on Communications},
  vol.~59, no.~6, pp. 1612--1620, June 2011.

\bibitem{ZhuDescClip2013}
X.~Zhu, H.~Hu, and Y.~Tang, ``{Descendent clipping and filtering for cubic
  metric reduction in OFDM systems},'' \emph{Electronics Letters}, vol.~49,
  no.~9, pp. 599 --600, April 2013.

\bibitem{SiohanCM2006}
A.~Skrzypczak, P.~Siohan, and J.~Javaudin, ``{Power Spectral Density and Cubic
  Metric for the OFDM/OQAM Modulation},'' in \emph{2006 IEEE International
  Symposium on Signal Processing and Information Technology}, Aug 2006, pp.
  846--850.

\bibitem{Sharif2004constantPMEPR}
M.~Sharif and B.~Hassibi, ``{Existence of codes with constant PMEPR and related
  design},'' \emph{Signal Processing, IEEE Transactions on}, vol.~52, no.~10,
  pp. 2836--2846, Oct 2004.

\bibitem{Afrasiabi2015derandomized}
S.~Afrasiabi~Gorgani and G.~Wunder, ``{Derandomized multi-block sign selection
  for PMEPR reduction of FBMC waveform},'' in \emph{Vehicular Technology
  Conference (VTC Spring), 2015 IEEE 81th}, May 2015.

\bibitem{TellamburaGuidedSS2018}
L.~Wang and C.~Tellambura, ``{Clipping-Noise Guided Sign-Selection for PAR
  Reduction in OFDM Systems},'' \emph{IEEE Transactions on Signal Processing},
  vol.~56, no.~11, pp. 5644--5653, Nov 2008.

\bibitem{Sharif2009sign}
M.~Sharif, V.~Tarokh, and B.~Hassibi, ``Peak power reduction of {OFDM} signals
  with sign adjustment,'' \emph{Communications, IEEE Transactions on}, vol.~57,
  no.~7, pp. 2160--2166, July 2009.

\bibitem{tellamburaCrossEntropy2008}
L.~Wang and C.~Tellambura, ``{Cross-Entropy-Based Sign-Selection Algorithms for
  Peak-to-Average Power Ratio Reduction of OFDM Systems},'' \emph{IEEE
  Transactions on Signal Processing}, vol.~56, no.~10, pp. 4990--4994, Oct
  2008.

\bibitem{MitzenmacherUpfal2005}
\BIBentryALTinterwordspacing
M.~Mitzenmacher and E.~Upfal, \emph{Probability and computing : randomized
  algorithms and probabilistic analysis}.\hskip 1em plus 0.5em minus
  0.4em\relax New York: Cambridge University Press. [Online]. Available:
  \url{http://opac.inria.fr/record=b1117540}
\BIBentrySTDinterwordspacing

\bibitem{WunderPeakValue2003}
G.~Wunder and H.~Boche, ``Peak value estimation of bandlimited signals from
  their samples, noise enhancement, and a local characterization in the
  neighborhood of an extremum,'' \emph{Signal Processing, IEEE Transactions
  on}, vol.~51, no.~3, pp. 771--780, March 2003.

\bibitem{KimCubicMetric2016}
K.~H. Kim, J.~S. No, and D.~J. Shin, ``{On the Properties of Cubic Metric for
  OFDM Signals},'' \emph{IEEE Signal Processing Letters}, vol.~23, no.~1, pp.
  80--83, Jan 2016.

\bibitem{Boyd}
S.~Boyd and L.~Vandenberghe, \emph{Convex Optimization}.\hskip 1em plus 0.5em
  minus 0.4em\relax New York, NY, USA: Cambridge University Press, 2004.

\bibitem{CMmotorola}
Motorola, ``{Cubic Metric in 3GPP LTE},'' 3GPP TSG-RAN WG1 LTE, Tech. Rep., Jan
  2006, tdoc R1-060023.

\bibitem{Benedetto1999}
S.~Benedetto and E.~Biglieri, \emph{Principles of Digital Transmission: With
  Wireless Applications}.\hskip 1em plus 0.5em minus 0.4em\relax Norwell, MA,
  USA: Kluwer Academic Publishers, 1999.

\bibitem{afrasiabiWSA2016}
S.~Afrasiabi-Gorgani and G.~Wunder, ``{A Versatile PAPR Reduction Algorithm for
  5G Waveforms with Guaranteed Performance},'' in \emph{WSA 2016; 20th
  International ITG Workshop on Smart Antennas; Proceedings of}, March 2016.

\bibitem{mcdiarmid1998}
M.~Habib, C.~McDiarmid, J.~Ramirez-Alfonsin, and B.~Reed, \emph{Probabilistic
  Methods for Algorithmic Discrete Mathematics}.\hskip 1em plus 0.5em minus
  0.4em\relax Springer-Verlag Berlin Heidelberg, 1998.

\bibitem{billingsley1999}
P.~Billingsley, \emph{Convergence of probability measures}, 2nd~ed.\hskip 1em
  plus 0.5em minus 0.4em\relax John Wiley \& Sons Inc., 1999.

\bibitem{AfrasiabiSPAWC2019}
S.~{Afrasiabi-Gorgani} and G.~{Wunder}, ``{The Method of Conditional
  Expectations for Cubic Metric Reduction in OFDM},'' in \emph{2019 IEEE 20th
  International Workshop on Signal Processing Advances in Wireless
  Communications (SPAWC)}, July 2019, pp. 1--5.

\bibitem{Rudin1987}
W.~Rudin, \emph{Real and Complex Analysis, 3rd Ed.}\hskip 1em plus 0.5em minus
  0.4em\relax New York, NY, USA: McGraw-Hill, Inc., 1987.

\bibitem{leadbetter1988}
\BIBentryALTinterwordspacing
M.~R. Leadbetter and H.~Rootzen, ``Extremal theory for stochastic processes,''
  \emph{Ann. Probab.}, vol.~16, no.~2, pp. 431--478, 04 1988. [Online].
  Available: \url{https://doi.org/10.1214/aop/1176991767}
\BIBentrySTDinterwordspacing

\bibitem{WeiKelly2002}
S.~Wei, D.~L. Goeckel, and P.~E. Kelly, ``{A modern extreme value theory
  approach to calculating the distribution of the peak-to-average power ratio
  in OFDM systems},'' in \emph{2002 IEEE International Conference on
  Communications. Conference Proceedings. ICC 2002 (Cat. No.02CH37333)},
  vol.~3, 2002, pp. 1686--1690 vol.3.

\end{thebibliography}

\end{document}